\documentclass[11pt]{article}
\usepackage[left=0.75in,right=0.75in,top=0.75in,bottom=0.75in]{geometry}
\usepackage{comment}
\usepackage{amsmath}
\usepackage{amsthm}
\usepackage{amssymb}
\usepackage[pdftex,
colorlinks,
linkcolor=blue,%
pdftitle=,%
pdfauthor=HSTX,%
pdfkeywords=Multi-Collison]{hyperref}
\usepackage[table,dvipsnames]{xcolor}
\usepackage{tikz}
\usetikzlibrary{decorations.pathmorphing}
\usepackage{fixmath}
\usepackage{booktabs}
\usepackage{braket}
\usepackage{graphicx}
\usepackage{here}
\usepackage{algorithm, algorithmic}

\newcommand*{\etal}{\textit{et~al.}}

\newcommand*{\algo}[1]{\ensuremath{\mathsf{#1}}}

\newcommand*{\event}[1]{\ensuremath{\mathsf{#1}}}

\newcommand*{\card}[1]{\lvert #1 \rvert}

\newcommand{\relmiddle}[1]{\mathrel{}\middle#1\mathrel{}}
\newcommand{\mymiddle}{\relmiddle{|}}

\newcommand*{\Time}{T}
\newcommand*{\Memory}{S}

\usepackage{xspace}

\newcommand*{\Mclaw}{\algo{Mclaw}}

\newcommand*{\Func}{\algo{Func}}
\newcommand*{\BBHT}{\algo{BBHT}}
\newcommand*{\E}{{\mathrm{\bf E}}}

\newcommand*{\Img}{\mathrm{Im}}

\newcommand*{\A}{\mathcal{A}}
\newcommand*{\B}{\mathcal{B}}

\newcommand*{\BHT}{\algo{BHT}}

\newcommand*{\MTPS}{\algo{MTQS}}

\newcommand*{\good}{\event{good}}

\newcommand*{\pregood}{\event{pregood}}
\newcommand*{\equal}{\event{equal}}
\newcommand{\upp}[2]{{#1}^{(#2)}}
\renewcommand{\set}[1]{\{#1\}}
\newcommand*{\rcoll}{\mathsf{RecMColl}}
\newcommand{\ceil}[1]{\lceil #1 \rceil}
\newcommand{\Natural}{\mathbb{N}}

\newcommand{\reg}[1]{\mathsf{#1}}

\usepackage{color}
\newcommand{\hnote}[1]{#1}

\newcommand{\tnote}[1]{#1}

\renewcommand{\algorithmicensure}{\textbf{Output:}}

\def\equationautorefname~#1\null{eq.~(#1)\null}

\newcommand{\email}[1]{\tt #1}

\newtheorem{theorem}{Theorem}

\newtheorem{problem}{Problem}

\newtheorem{lemma}{Lemma}

\newtheorem{corollary}{Corollary}

\newtheorem{claim}{Claim}

\newenvironment{proofof}[1]{\noindent{\it Proof of }#1.}{\hfill$\square$}

\newcommand{\calO}{{\cal O}}

\begin{document}
\begin{titlepage}
\pagestyle{plain}

\title{\bf Quantum Algorithm for the Multicollision Problem%
\footnote{Preliminary versions of this paper appeared in the proceedings
of Asiacrypt 2017~\cite{DBLP:conf/asiacrypt/HosoyamadaSX17} and PQCrypto 2019~\cite{PQC19}.}
}

\author{
Akinori Hosoyamada\footnotemark[2]$^{~,}$\footnotemark[3] \and
Yu Sasaki\footnotemark[2] \and
Seiichiro Tani\footnotemark[4] \and
Keita Xagawa\footnotemark[2]
}
\date{\today}
\maketitle

\renewcommand{\thefootnote}{\fnsymbol{footnote}}
\begin{center}
\footnotemark[2] {NTT Secure Platform Laboratories, NTT Corporation. \\
 3-9-11, Midori-cho, Musashino-shi, Tokyo 180-8585, Japan.}
 \and
 
 \footnotemark[3]
 {Department of Information and Communication Engineering, Nagoya University.\\
 Furo-cho, Chikusa-ku, Nagoya 464-8603, Japan.}
 \and
 
 \footnotemark[4]
 {NTT Communication Science Laboratories, NTT Corporation. \\
 3-1, Morinosato-Wakamiya, Atsugi-shi, Kanagawa 243-0198, Japan.}
 \email{\{akinori.hosoyamada.bh,yu.sasaki.sk,seiichiro.tani.cs,keita.xagawa.zv\}@hco.ntt.co.jp}
\end{center}

\renewcommand{\thefootnote}{\arabic{footnote}}

\begin{abstract}
The current paper presents a new quantum algorithm for finding multicollisions, often denoted by $\ell$-collisions, where an $\ell$-collision for a function is a set of $\ell$ distinct inputs that are mapped
by the function to the same value.
In cryptology, it is important to study how many queries are required to find an $\ell$-collision for a random function of which domain is larger than its range.
However, the problem of finding $\ell$-collisions for random functions has not received much attention
in the quantum setting.
The tight bound of quantum query complexity for finding a $2$-collisions of a random function has been revealed to be $\Theta(N^{1/3})$, where $N$ is the size of the range of the function, but neither the lower nor upper bounds are known for general $\ell$-collisions.
The paper first integrates the results from existing research to derive several new observations, e.g.,~$\ell$-collisions can be generated only with $O(N^{1/2})$ quantum queries for any integer constant $\ell$.
It then provides a quantum algorithm that finds an $\ell$-collision for a random function with the average quantum query complexity of $O(N^{(2^{\ell-1}-1) / (2^{\ell}-1)})$, which matches the tight bound of $\Theta(N^{1/3})$ for $\ell=2$ and improves upon the known bounds, including the above simple bound of $O(N^{1/2})$. 
More generally, the  algorithm achieves the average quantum query complexity of $O\big(c_N \cdot N^{({2^{\ell-1}-1})/({ 2^{\ell}-1})}\big)$
and runs 
over $\tilde{O}\big(c_N \cdot N^{({2^{\ell-1}-1})/({ 2^{\ell}-1})}\big)$ qubits
in $\tilde{O}\big(c_N \cdot N^{({2^{\ell-1}-1})/({ 2^{\ell}-1})}\big)$ expected time 
 for a random function $F\colon X\to Y$ such that
$|X| \geq \ell \cdot |Y| / c_N$ for any $1\le c_N \in  o(N^{{1}/({2^\ell - 1})})$.
With the same complexities, it is actually able to find a multiclaw for random functions, which is
harder to find than a multicollision.

\bigskip
\textbf{Keywords}
post-quantum cryptography, quantum algorithm, multiclaw, multicollision
\end{abstract}

\end{titlepage}

\section{Introduction}
\label{sec:introduction}
Finding collisions or multicollisions is a fundamental problem in theoretical computer science and one of the most central problems in cryptography. For given finite sets $X$ and $Y$ with $\card{Y} = N$, and a function $F \colon X \to Y$, the $\ell$-collision finding problem is to find a set of $\ell$ distinct inputs $x_1,\dots,x_\ell$ such that $F(x_1) = \dots = F(x_\ell)$. 
Bounding query and time complexities of the $\ell$-collision finding problem is fundamental and has several applications in cryptography. 

\subsubsection*{Applications of multicollisions.}
We often use the lower bound of query complexity (or the upper bound of the success probability) to prove the security of cryptographic schemes. Let us consider a cryptographic scheme based on Pseudo-Random Functions (PRFs). In the security proof, we replace the PRFs with truly random functions (or random oracles) and show the security of the scheme with the random oracles by information-theoretic arguments. In the latter security arguments, we often use \emph{the lower bound} on the 
number of queries required to find multicollisions of random functions. For example, Chang and Nandi~\cite{DBLP:conf/fse/ChangN08} proved the indifferentiability of the chopMD hash function construction; Jaulmes, Joux, and Valette~\cite{DBLP:conf/fse/JaulmesJV02} proved the indistinguishability of RMAC; Hirose~\etal~\cite{DBLP:conf/icisc/HiroseIKOPY10} proved the indifferentiability of the ISO standard lightweight hash function Lesamnta-LW; Naito and Ohta~\cite{DBLP:conf/scn/NaitoO14} improved the indifferentiability of PHOTON and Parazoa hash functions; and Javanovic, Luykx, and Mennink~\cite{JLM14} greatly improved the security lower bounds of authenticated-encryption mode of KeyedSponge. The upper bound of the probability of obtaining multicollisions after $q$ queries plays an important role in their proofs. 

In addition, studying and improving the upper bound for the $\ell$-collision finding problem also help our understanding the complexity of generic attacks. For example, $\ell$-collisions are exploited in the collision attack on the MDC-2 hash function construction by Knudsen~\etal~\cite{DBLP:conf/eurocrypt/KnudsenMRT09}, the preimage attack on the JH hash function by Mendel and Thomsen~\cite{MT08}, the internal state recovery attack on HMAC by Naito~\etal~\cite{DBLP:conf/iwsec/NaitoSWY13}, the key recovery attack on iterated Even-Mansour by Dinur~\etal~\cite{DBLP:conf/asiacrypt/DinurDKS14}, and the key recovery attack on LED block cipher by Nikoli\'{c}, Wang, and Wu~\cite{DBLP:conf/fse/NikolicWW13}.

Furthermore, multicollisions also have applications in protocols. An interesting example is a micro-payment scheme, MicroMint~\cite{RS96a}. Here, a coin is a bit-string the validity of which can be easily checked but hard to produce. In MicroMint, coins are $4$-collisions of a function. If $4$-collisions can be produced quickly, a malicious user can counterfeit coins.
Recently, Bitansky  \etal~\cite{DBLP:conf/stoc/BitanskyKP18} showed that a 3-message zero-knowledge argument against arbitrary polynomial-size non-uniform adversaries can be constructed from multicollision resistant hash functions.
Moreover, Berman \etal~\cite{DBLP:conf/eurocrypt/BermanDRV18} proved that the existence of the multicollision resistant hash functions implies that the existence of constant-round statistically hiding and computationally binding commitment schemes.
Komargodski \etal~\cite{DBLP:conf/eurocrypt/KomargodskiNY18} proved that we can construct some commitment schemes by assuming the existence of a multicollision resistant hash function, instead of assuming the existence of a collision resistant hash function.

\subsubsection*{Existing results on multicollisions in classical settings.}
The problem of finding (multi)collisions has been extensively discussed in the classical setting. Suppose that we can access the function $F$ given as an oracle with \emph{classical} queries; that is, we can send $x \in X$ to the oracle $F$ and obtain $y \in Y$ as $F(x)$.
For a random function $F$, making $q$ queries to $F$ can find a collision of $F$ with a probability bounded by $O(q^2/N)$, which implies that we cannot find collisions of $F$ with high probability 
if $q$ is in $o(N^{1/2})$.
In addition, we obtain a collision with a constant probability by making $O(N^{1/2})$ queries.
Thus, the query complexity of finding a collision of a random function with at least a constant probability is $\Theta(N^{1/2})$.
This can be extended to the general $\ell$-collision cases: 
The bound for finding an $\ell$-collision of a random function is $\Theta(N^{(\ell-1)/\ell})$ (see \cite{STKT08}, for example).

The above argument only focuses on the number of queries.
To implement the $\ell$-collision finding algorithm, the computational cost, $\Time$, and the memory amount, $\Memory$, or their tradeoff should be considered.
The simple method needs to store all the results of the queries.
Hence, it requires that $\Time$ and $\Memory$ are in $\Theta(N^{1/2})$ for collisions, and $\Time$ and $\Memory$ are in $\Theta(N^{(\ell-1)/\ell})$ for $\ell$-collisions.
The collision finding algorithm can be made memoryless by using Floyd's cycle detecting algorithm~\cite{F67-1}.
However, no such memoryless algorithm is known for $\ell$-collisions, and thus the researchers' goal is to achieve a better complexity with respect to $\Time \times \Memory$ or to trade $\Time$ and $\Memory$ for a given $\Time \times \Memory$.

An $\ell$-collision can be found with 
$\Time = O(N)$ 
and $\Memory = \tilde{O}(1)$ by running a brute-force preimage attack $\ell$ times for a fixed target, if the domain is sufficiently large and $\ell$ is constant. Although this method achieves better $\Time \times \Memory$ than the simple method, it cannot trade $\Time$ for $\Memory$. Joux and Lucks~\cite{JL09} discovered the $3$-collision finding algorithm with $\Time \in O(N^{1-\alpha})$ and $\Memory \in O(N^\alpha)$ for $\alpha < 1/3$ by using the parallel collision search technique. Nikoli\'{c} and Sasaki ~\cite{DBLP:conf/asiacrypt/NikolicS16} achieved the same complexity as Joux and Lucks by using an unbalanced meet-in-the-middle attack.

\subsection{Collisions and Multicollisions in Quantum Settings}
Algorithmic speedup using quantum computers has been actively discussed recently. For example, Grover's seminal result~\cite{Gr} attracted cryptographers' attention because of the quantum speedup of database search. Given a function $F \colon X \to \{0,1\}$ such that there exists a unique $x_0 \in X$ that satisfies $F(x_0)=1$, Grover's algorithm finds $x_0$ by making $O\big(\card{X}^{1/2}\big)$ quantum queries.

This paper discusses the complexity of quantum algorithms in \emph{the quantum query model}. In this model, a function $F$ is given as an oracle (\emph{or},  black box), and the complexity of quantum algorithms is measured as the number of quantum queries to $F$. A quantum query model is widely adopted, and previous studies on finding collisions in the quantum setting follow this model~\cite{BHT97,Amb07,Belovs12,Yuen14,Zha15}.

Previous research on finding collisions and multicollisions can be classified 
based on two types of dichotomies.

\begin{description}
\item[Domain size and range size.] The domain size of the function $F \colon X \to Y$ relative to its range size is a sensitive problem for quantum algorithms. Some quantum algorithms aim to find collisions or multicollisions of $F$ with $\card{X} \geq \card{Y}$, while others target $F$ with $\card{X} < \card{Y}$. The former algorithms can be directly applied to find collisions or multicollisions of real \emph{hash functions} such as SHA-3. The latter ones mainly target \emph{database search} rather than hash functions. The latter can also be applied to the case of hash functions, but it generally 
costs much more than the former (in general, the former algorithm cannot be used
for database search).

Hereafter, we use ``H'' and ``D'' to denote the cases with $\card{X} \geq \card{Y}$ and $\card{X} < \card{Y}$, respectively. We note that our goal is to find a new multicollision algorithm that can be applied to hash functions, namely,  the H setting.

\item[Random function and any function.] Both in classical and quantum settings, some algorithms assume the uniform distribution on inputs: they can find a collision  if $F$ is chosen uniformly at random from $\Func(X,Y) := \{f \mid f\colon X \to Y\}$.
Others assume no input distributions:
they can find a collision of \emph{any} function 
$F\in \Func(X,Y)$.
Such algorithms obviously find a collision of a randomly chosen function,
while their complexity
may be much worse than the average-case complexity of algorithms tailor-made for a randomly chosen function
(the complexity of an algorithm for a random function is averaged over the input distribution
and the randomness of algorithms).
In addition, yet other algorithms assume that $F$ is an arbitrary $\ell$-to-$1$ function with $|X|=\ell\cdot |Y|$.
Hereafter, we use ``Rnd'',  ``Arb'', ``Arb$_\ell$'' to denote the case in which $F$ is chosen uniformly at random from $\Func(X,Y)$,  the case in which $F$ is chosen arbitrarily from $\Func(X,Y)$,
and the case in which $F$ is chosen arbitrarily from the set of $\ell$-to-1 functions in $\Func(X,Y)$
with $|X|=\ell\cdot |Y|$.
This paper focuses on the Rnd setting.
\end{description}

In the following, we revisit the existing results on collision-finding and multicollision-finding algorithms in the quantum setting.
\begin{itemize}
\item Brassard, H{\o}yer, and Tapp~\cite{BHT97} proposed a quantum algorithm \algo{BHT}
that finds a 2-collision in the H-Arb$_\ell$ setting.
To be more precise, \algo{BHT} finds a $2$-collision of any $\ell$-to-one function with $O(N^{1/3})$ quantum queries and the memory amount of $O(N^{1/3})$. 
\item Ambainis~\cite{Amb07} studied the element distinctness problem,
that is, the problem of finding an $\ell$-collision in the D-Arb setting.
 The quantum query complexity of the algorithm is  $O(M^{\ell/(\ell+1)})$, where $M$ is the domain size.
\item Belovs~\cite{Belovs12} improved Ambainis' bound
to $O(M^{1 - 2^{\ell-2}/(2^\ell-1)})$.
\item Zhandry~\cite{Zha15} observed that Ambainis' algorithm 
 can be modified so as to find a 2-collision in the H-Rnd setting with $O(N^{1/3})$ quantum queries, when $\card{X}$ is in $\Omega(N^{1/2})$ and $N=\card{Y}$.
\item Yuen~\cite{Yuen14} discussed the application of \algo{BHT} when $\card{X} = \card{Y}$ and the target function $F$ is considered in the Rnd setting. In this case, the quantum query complexity is $O(N^{1/3})$. We omit the details since the discussed case in Yuen's work~\cite{Yuen14} is a subset of Zhandry's extension of Ambainis' algorithm.
\item Regarding the lower bound, the number $O(N^{1/3})$ of queries made by  \algo{BHT} to find a $2$-collision 
in the Arb$_\ell$ setting
was proved to be optimal by 
Refs.~\cite{AS04,Amb05,Kut05}.
Zhandry~\cite{Zha15} proved that the upper bound $O(N^{1/3})$ for 2-collisions 
in the Rnd setting
is optimal
by providing the matching lower bound $\Omega(N^{1/3})$.\footnote{Zhandry showed that any quantum algorithm with $q$ quantum queries finds a $2$-collision with probability at most $O\big(q^3/N\big)$~\cite{Zha15}.} 
Obviously, the lower bound $\Omega(N^{1/3})$ also holds for $\ell > 2$, but no advanced lower bound is known for $\ell > 2$.
\end{itemize}

The classification of the existing algorithms is shown in \autoref{Tbl:classification}. 
\begin{table}[h]
\centering
\caption{Summary of existing quantum algorithms for finding (multi)collisions.}
\label{Tbl:classification}
\begin{tabular}{l@{\hspace{1.5em}}l@{\hspace{1.5em}}l}
\toprule
& Random function ``Rnd'' & Arbitrary function ``Arb'' \\
\midrule
Database ``D''& Zhandry + Ambainis (2-col)& Ambainis ($\ell$-col) \\
&  {\bf Ours} ($\ell$-col) & Belovs ($\ell$-col) \\
\midrule
Hash ``H''& Zhandry + Ambainis (2-col)& BHT ($2$-col) [Arb$_\ell$] \\ 
& Yuen ($2$-col) & Ambainis ($\ell$-col) \\ 
& {\bf Ours} ($\ell$-col)  & Belovs ($\ell$-col) \\
\bottomrule
\end{tabular}
\end{table}

As mentioned earlier, Ambainis' algorithm~\cite{Amb07} and its improvement by Belovs~\cite{Belovs12} originally focused on the database search, but they can also be applied to the hash function setting. However, all the other approaches for the hash function setting only analyze $2$-collisions. Hence, we can conclude that no quantum algorithm exists that is optimized for finding $\ell$-collisions for hash functions.

\subsection{Our Contributions}
Previous algorithms for $\ell$-collision finding in the D-Arb  settings
can be directly applied to the case of  random hash functions.
However, the latter case has not been sufficiently considered especially for the case of general $\ell$.
This motivates us to provide a systematization of knowledge about existing quantum algorithms. Namely, we, for the first time in this field, provide the state of the art of the complexity of finding $\ell$-collisions against hash functions with a direct application, trivial extension, and simple combination of existing results. 
This state of the art sheds light on the problems that require further investigation. 

For the second but main contribution of this paper, we present a new quantum algorithm to find an $\ell$-collision of a hash function chosen at random.

Our contributions in each part are detailed below.

\subsubsection*{Systematization of knowledge (combination of existing results).}
\begin{itemize}
\item Our first observation is that, when $F$ is a random function and $\card{X} = \ell\cdot  \card{Y}$ for any integer constant $\ell$, the query complexity of the $\ell$-collision finding problem is lowered to $O(N^{1/2})$ by simply applying Grover's algorithm. Hence, any meaningful generic attack in the quantum setting must achieve the query complexity in $o(N^{1/2})$. Intuitively, a preimage of the hash value can be generated with $O(N^{1/2})$ queries in the quantum setting and an $\ell$-collision is produced by generating $\ell$ preimages. This corresponds to the $O(N)$ upper bound on the classical complexity (note that this upper bound is for the Rnd setting and does not hold for the Arb setting).
\item The above observation is quite straightforward but useful to measure the effect of other attacks. For example, Ambainis' $\ell$-collision search for database \cite{Amb07} can be 
applied directly
to the hash function case with $O(M^{\ell/(\ell+1)})$ complexity where $M$ is the domain size. However, this cannot be below $O(N^{1/2})$ for any $\ell\ge 2$ if $M \geq N$. The same applies to the improvement by Belovs~\cite{Belovs12}. Those direct applications of the algorithms can be meaningful only in the Arb setting.
\item Zhandry~\cite{Zha15} discussed the application of Ambainis' $\ell$-collision search in H-Rnd and D-Rnd only for $\ell =2$, although it can trivially be extended to the case of $\ell > 2$
(by sampling an $N^{(\ell-1)/\ell}$-size subset of $X$
and applying Ambainis's algorithm to the subset). 
However, the complexity obtained by extending the idea to $\ell =3$ already reaches $O(N^{1/2})$. Thus, Zhandry's idea only works for $\ell =2$.
\item Zhandry~\cite{Zha15} considered Ambainis' $\ell$-collision search rather than Belovs' improvement~\cite{Belovs12}. If we consider Zhandry + Belovs, the complexity in H-Rnd for $\ell =3$ becomes $O(N^{10/21})$, which is faster than the simple application of Grover's algorithm. Thus, it is a meaningful generic attack. For $\ell \ge 4$, the complexity of Zhandry + Belovs reaches $O(N^{1/2})$.
\item In summary, for the Rnd setting, the tight algorithm with $O(N^{1/3})$ complexity exists for $\ell =2$. There is a better generic attack than the simple application of Grover's algorithm for $\ell =3$. 
For $\ell \ge 4$, there is no known algorithm better than the application of Grover's algorithm. 
\end{itemize}

\subsubsection*{New quantum multicollision-finding algorithm.}
Given the above state of the art, our main contribution is a new $\ell$-collision finding algorithm.
For this, we first provide 
an efficient quantum algorithm 
for a more general problem: the 
$\ell$-claw finding problem,
where 
an \emph{$\ell$-claw} for $\ell$ functions $f_i\colon X_i\rightarrow Y$ for $i\in [\ell]$ is a tuple $(x_1,\dots,x_\ell,y) \in X_1 \times \cdots \times X_\ell \times Y$ such that 
$f_i(x_i)=y$ for all $i\in [\ell]$.
\begin{theorem}[\bf Multiclaw-finding: Informal]\label{thm:claw_informal}
Let $N$ be a sufficiently large positive integer, and let $c_N$
be any fixed real satisfying $1\le c_N\in o(N^{{1}/({2^\ell-1})})$.
Suppose that, for each $i\in [\ell]$,
function $f_i\colon X_i\to Y$
is chosen uniformly at random
from the set of all functions
from a set $X_i$ to a set $Y$,
where $\card{Y}=N$ and $\card{X_i}\ge N/c_N$
for each $i\in [\ell]$.
Then, there exists a quantum algorithm that, for $\ell$ functions $f_i \ (i\in [\ell])$ given as an oracle,
finds an $\ell$-claw with probability at least some constant,
where the probability is taken over both the inherent randomness of the algorithm
and the randomness of choices of the functions $f_i$.
Moreover, the algorithm makes at most
$\mathsf{Qlimit} := O\big( c_N \cdot  N^{(2^{\ell-1}-1)/(2^{\ell}-1})\big)$ 
quantum queries,
and runs
in $\tilde{O}(\mathsf{Qlimit})$ time 
on $\tilde{O}(\mathsf{Qlimit})$ qubits
for every possible function $f_i$,
where $\tilde{O}(\cdot)$ suppresses a $\log N$ factor.
\end{theorem}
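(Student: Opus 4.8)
The plan is to construct and analyze a recursive algorithm, which I will call $\rcoll$, that builds up $k$-claws from $(k-1)$-claws for $k=1,\dots,\ell$. Its one primitive is a multi-target quantum search $\MTPS$: given a function $f$ and a list $\mathcal{L}$ of target values in $Y$, it finds an $x$ with $f(x)\in\mathcal{L}$ by amplitude amplification in the $\BBHT$ style (needed because the number of marked inputs is a priori unknown). At level $1$ I would build a list $\mathcal{L}_1$ of $L_1$ distinct $1$-claws by simply evaluating $f_1$ on $L_1$ arbitrary inputs. At level $k\ge 2$, given a list $\mathcal{L}_{k-1}$ of $L_{k-1}$ distinct $(k-1)$-claws, I would repeatedly run $\MTPS$ on $f_k$ against the $y$-values appearing in $\mathcal{L}_{k-1}$, each success yielding an $x_k$ that extends one $(k-1)$-claw to a $k$-claw, until a list $\mathcal{L}_k$ of $L_k$ distinct $k$-claws is assembled. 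The final output is the single $\ell$-claw obtained by taking $L_\ell=1$.

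The list sizes $L_1\ge L_2\ge\cdots\ge L_\ell=1$ are chosen to balance the per-level query costs. For a random $f_k$, the fraction of inputs mapped into a fixed set of $L_{k-1}$ values is of order $L_{k-1}/N$, so each $\MTPS$ call costs $\tilde{O}(\sqrt{N/L_{k-1}})$ queries and the total query count satisfies $C_\ell = L_1 + \sum_{k=2}^{\ell} L_k\sqrt{N/L_{k-1}}$. Writing $L_k=N^{a_k}$ and demanding that every summand equal the same target $N^{e}$ gives $a_1=e$ and the recurrence $a_k = a_{k-1}/2 + e - 1/2$; imposing the boundary condition $a_\ell=0$ forces $e=(2^{\ell-1}-1)/(2^{\ell}-1)$, exactly the exponent in $\mathsf{Qlimit}$. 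I would then verify that the resulting $a_k$ are decreasing (so that $L_k\le L_{k-1}$ and enough distinct targets are always available) and bound $C_\ell=O(N^{e})$ before folding in the $c_N$ factor.

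The crux of the argument, and the step I expect to be the \emph{main obstacle}, is the probabilistic analysis over the random choice of the $f_i$, which must simultaneously guarantee, at every level and across all of the (up to $L_k$) searches performed there, that: (i) enough distinct $k$-claws actually exist, so each list can be filled to size $L_k$; (ii) each $\MTPS$ invocation faces a marked-input fraction of order $L_{k-1}/N$, so amplitude amplification succeeds within $\tilde{O}(\sqrt{N/L_{k-1}})$ queries rather than running away; and (iii) the number of target values in $\mathcal{L}_{k-1}$ that possess an $f_k$-preimage—only an $\approx 1/c_N$ fraction when $c_N$ is large—still suffices to supply the $L_k$ distinct $k$-claws demanded at level $k$. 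I would establish (i)–(iii) through first- and second-moment (Chebyshev-type) concentration on preimage counts and on the number of realizable claws, handling the conditional dependence between consecutive levels by conditioning on the success events of earlier levels. Because $\ell$ is constant, a union bound over the $O(\ell)$ levels keeps the total failure probability below a constant, yielding the claimed constant success probability.

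Finally, I would settle the complexity accounting and the two hypotheses. To obtain a hard upper bound of $\mathsf{Qlimit}$ on the query count (rather than an expectation) together with worst-case $\tilde{O}(\mathsf{Qlimit})$ time and qubits for every function, I would cap each $\MTPS$ call with a fixed query budget and abort if it is exceeded; the membership test ``$f_k(x)\in\mathcal{L}_{k-1}$'' is implemented on a sorted copy of the list via binary search in superposition, contributing only the $\log N$ factor hidden in $\tilde{O}(\cdot)$ to time and qubit count. The relaxation $\card{X_i}\ge N/c_N$ lowers the expected number of preimages per value to $\ge 1/c_N$, which inflates both the list sizes needed for (iii) and the search costs by $\poly(c_N)$ factors; carrying these through multiplies the bound by $c_N$, giving $\mathsf{Qlimit}=O(c_N\cdot N^{e})$. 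The hypothesis $c_N\in o(N^{1/(2^\ell-1)})$ is precisely what keeps $L_1=O(c_N N^{e})$ below the domain sizes $\card{X_i}$ and the entire parameter regime self-consistent.
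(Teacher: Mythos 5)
Your proposal follows essentially the same route as the paper: the same BHT-style iterative list construction with multi-target BBHT search, the same balancing recurrence yielding the exponent $(2^{\ell-1}-1)/(2^\ell-1)$, the same $c_N$-inflation of the list sizes to compensate for the small image of each $f_i$, and the same cap-plus-Markov device to turn an expected query count into a worst-case bound. The only substantive difference is that you propose Chebyshev-type second-moment concentration where the paper uses McDiarmid's inequality for $\lvert\mathrm{Im}(f_i)\rvert$ and a hypergeometric tail bound for $\lvert\mathrm{Im}(f_i)\cap L'_{i-1}\rvert$; either suffices for a constant success probability here.
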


Via a simple reduction to the $\ell$-claw finding problem (see \autoref{lem:clawtocoll}), we provide
an $\ell$-collision finding algorithm as stated below.

\begin{corollary}[\bf Multicollision-finding: Informal]\label{thm:main_informal}
Let $N$ be a sufficiently large positive integer, and let $c_N$
be any fixed real satisfying $1\le c_N\in o(N^{{1}/({2^\ell-1})})$.
Suppose that a function $F$ is chosen uniformly at random
from the set of all functions from a set $X$ to a set $Y$,
where $\card{Y}=N$ and $\card{X}\ge \ell \cdot N/c_N$.
Then, there exists a quantum algorithm that, for function $F$ given as an oracle,
finds an $\ell$-collision with  probability at least some constant,
where the probability is taken over both the inherent randomness of the algorithm
and the randomness of choices of the function $F$.
Moreover, the algorithm makes at most
$\mathsf{Qlimit} := O\big( c_N \cdot  N^{(2^{\ell-1}-1)/(2^{\ell}-1})\big)$ 
quantum queries,
and runs
in $\tilde{O}(\mathsf{Qlimit})$ time 
on $\tilde{O}(\mathsf{Qlimit})$ qubits
for every possible function $F$,
where $\tilde{O}(\cdot)$ suppresses a $\log N$ factor.
\end{corollary}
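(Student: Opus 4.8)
The plan is to obtain the corollary from \autoref{thm:claw_informal} by a black-box reduction (the reduction recorded in \autoref{lem:clawtocoll}) that turns a single random function $F\colon X\to Y$ into $\ell$ independent random functions sharing the common range $Y$, so that the hypotheses of the multiclaw theorem are met exactly. First I would fix a partition of the domain into $\ell$ pairwise-disjoint subsets $X=X_1\cup\dots\cup X_\ell$ with $\card{X_i}\ge N/c_N$ for every $i$; such a partition exists precisely because $\card{X}\ge \ell\cdot N/c_N$. Setting $f_i:=F|_{X_i}$, any $\ell$-claw $(x_1,\dots,x_\ell,y)$ for $(f_1,\dots,f_\ell)$ immediately yields an $\ell$-collision of $F$: the claw condition $f_i(x_i)=y$ for all $i$ gives $F(x_1)=\dots=F(x_\ell)=y$, and the inputs $x_i$ are automatically pairwise distinct because the $X_i$ are disjoint.

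The one substantive point is to verify that the restrictions $f_i$ obey the input distribution demanded by \autoref{thm:claw_informal}, namely that each $f_i$ is uniform on $\Func(X_i,Y)$ and that the family $(f_1,\dots,f_\ell)$ is mutually independent. This holds because a uniform $F\in\Func(X,Y)$ assigns independent, uniformly distributed values to the points of the disjoint domains $X_i$; hence the joint law of $(f_1,\dots,f_\ell)$ is exactly the product of $\ell$ independent uniform distributions on the $\Func(X_i,Y)$. With the distributional hypotheses matched, the success guarantee of the multiclaw algorithm transfers verbatim: it outputs an $\ell$-claw, and therefore an $\ell$-collision of $F$, with at least constant probability over both its internal randomness and the choice of $F$.

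For the resources, each oracle call to $f_i$ is implemented by a single call to $F$ restricted to inputs in the (fixed, efficiently recognizable) set $X_i$, so the query count is unchanged and the time and qubit overhead of maintaining the partition is absorbed into the $\tilde{O}(\cdot)$ that already suppresses $\log N$ factors. Since $\card{Y}=N$ is common and each $\card{X_i}\ge N/c_N$ with the same $c_N\in o(N^{1/(2^\ell-1)})$, \autoref{thm:claw_informal} applies directly and yields $\mathsf{Qlimit}=O\big(c_N\cdot N^{(2^{\ell-1}-1)/(2^\ell-1)}\big)$ queries together with $\tilde{O}(\mathsf{Qlimit})$ time and space, as claimed. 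I expect no serious obstacle: the only thing to be careful about is the distributional matching of the previous paragraph — in particular that it is disjointness of the \emph{domains} (forcing distinct claw inputs), and not any condition on the \emph{images} of the $f_i$, that the multiclaw analysis relies upon — and once this is checked the exponent is preserved with no loss.
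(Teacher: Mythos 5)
Your proposal is correct and follows exactly the paper's route: the paper derives the corollary from \autoref{thm:claw_informal} via \autoref{lem:clawtocoll}, i.e., by partitioning $X$ into $\ell$ disjoint subsets of size at least $N/c_N$, noting that the restrictions of a uniform $F$ to disjoint domains are independent uniform functions, and observing that any $\ell$-claw for these restrictions is an $\ell$-collision of $F$ since disjointness forces the inputs to be distinct. Your attention to the distributional matching and to the fact that distinctness comes from disjoint domains is precisely the (implicit) content of the paper's reduction, with no loss in query, time, or space.
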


We first provide some remarks on the complexity analysis.
When dealing with random functions,
we always consider average case complexity in analyzing (quantum) algorithms, following standard conventions in cryptology.
More concretely, when we discuss average (or expectation) value of query/time/space complexity and success probability of a quantum algorithm, we take the average over both of the randomness involved in the quantum algorithm (termed ``inherent randomness of the algorithm'' in the theorem) and the randomness of problem instances (i.e., the randomness required for choosing input functions).
However, if we accept additional constant errors,  it follows from Markov's inequality that even if we consider the worst case over both of the randomness, the orders for the quantum query/time/space complexity of the algorithm is the same as that for the average case complexity.
In fact, the quantum query/time/space complexity given in the above theorem and corollary
has an upper limit over all possible functions.

In the special case where $c_N$ is a constant, our algorithm 
in \autoref{thm:main_informal}
can find an $\ell$-collision of a random function with $O\big(N^{({2^{\ell-1}-1})/({ 2^{\ell}-1})}\big)$ quantum queries.
We thus achieve a speedup compared with the simple upper bound of $O(N^{1/2})$ for any $\ell$. 
Our upper bound matches the lower bound of $\Omega(N^{1/3})$ for $\ell =2$ and is better than the upper bound $O(N^{10/21})$ of Zhandry + Belovs for $\ell =3$.
In addition, our bound improves the simple bound of $O(N^{1/2})$ for $\ell\ge 4$ for the first time.
The complexity of our algorithm for small constants $\ell$ is shown in \autoref{tbl:numbers}. The complexities are compared in \autoref{fig:Known-Bounds-Rnd}.

Unlike other algorithms for the Arb setting, our algorithm asymptotically approaches  $O(N^{1/2})$ as $\ell$ increases.
The previous results by Ambainis~\cite{Amb07} asymptotically approaches to $O(M)$, and Belovs~\cite{Belovs12} asymptotically approaches to $O(M^{3/4})$, respectively, where $M = \card{X}$.
The complexities are compared in~\autoref{fig:Known-Bounds-Arb} for $M = \ell \cdot N$.

The main idea of our new algorithm is very simple: We just extend the strategy of the $\BHT$ algorithm, which first makes a list $L_1$ of many $1$-collision (i.e., many elements with distinct images), and then extend $L_1$ to a $2$-collision with the Grover search.
By extending this strategy, to find an $\ell$-claw, we first make a list $L_1$ of many $1$-claws.
Then, we make a list $L_i$ of many $i$-claws from $L_{i-1}$ by iteratively applying the Grover search for $2 \leq i \leq \ell-1$  and finally extend $L_{\ell-1}$ to an $\ell$-claw by using the Grover search again.
By optimizing the sizes of the lists, we obtain the quantum query complexity $O(c_N\cdot N^{(2^{\ell-1}-1)/(2^\ell-1)})$. The analysis is not very difficult if the input function $f_i:X_i\to Y$ is one-to-one
or even random with $\card{X_i}\ge \card{Y}$ for every $i\in [\ell]$.
However, we need elaborate analyses in optimizing list sizes to deal with
random functions whose domain sizes may be much smaller than their range size,
since the query complexity heavily depends on the amount of the overlap between
the images of the random functions.
We provide a rigorous complexity evaluation of our algorithm, which is another main focus of this paper.
Our evaluation suggests that our algorithm finds a $2$-collision of SHA3-512 with $2^{181}$ quantum queries and finds a $3$-collision with $2^{230}$ quantum queries.

\begin{figure}[h]
\centering
\begin{tikzpicture}[scale=.5,
inner sep=1pt,
ours/.style={circle,draw=black,fill=black,thin,radius=2pt}]
\draw[->,thin] (-0.5,0) -- (8.5,0);
\draw[->,thin] (0,-1/2) -- (0,6.5);
\node[anchor=south] at (0,6.7) {$\log_NQ$};
\node[anchor=west] at (9,0) {$\ell$};
\node[gray!80,anchor=east] at (-0.5,0) {$1/3$};
\draw[gray!80,very thin] (-1/2,6) node[anchor=east] {$1/2$} -- (8.5,6); 
\node[anchor=west] at (8.5,6) {$\longleftarrow$ Trivial upper bound};
\foreach \x in {2,...,8} {\draw[gray!80,very thin] (\x,-1/2) node[anchor=north] {$\x$} -- (\x,6.5);}
\draw (2,36 * 1/3 - 12)			circle [radius=5pt];
\draw (3,36 * 10/21 - 12)			circle [radius=5pt];
\draw (4,36 * 1/2 - 12)			circle [radius=5pt];
\draw (5,36 * 1/2 - 12)			circle [radius=5pt];
\draw (6,36 * 1/2 - 12)		circle [radius=5pt];
\draw (7,36 * 1/2 - 12)		circle [radius=5pt];
\draw (8,6)		circle [radius=5pt];
\filldraw [ours] (2,36 * 1/3 - 12)			circle [radius=3pt];
\filldraw [ours] (3,36 * 3/7 - 12)			circle [radius=3pt];
\filldraw [ours] (4,36 * 7/15 - 12)			circle [radius=3pt];
\filldraw [ours] (5,36 * 15/31 - 12)			circle [radius=3pt];
\filldraw [ours] (6,36 * 31/63 - 12)		circle [radius=3pt];
\filldraw [ours] (7,36 * 63/127 - 12)		circle [radius=3pt];
\filldraw [ours] (8,36 * 127/255 - 12)		circle [radius=3pt];
\draw (10.5,2.5) circle [radius=5pt];
\node [anchor=west] at (11.1,2.5) {: Known bound for $\ell = 2$};
\node [anchor=west] at (11.7,1.7) {and our observation for $\ell \geq 3$};
\filldraw [ours] (10.5,0.9) circle [radius=3pt];
\node [anchor=west] at (11.1,0.8) {: Our algorithm};
\end{tikzpicture}
\caption{Quantum query complexity for finding an $\ell$-collision for a random function: The label of the vertical axis, $\log_NQ$,  denotes the logarithm of  the number of queries to the base $N$, where $N$ is the size of the range of the function.}
\label{fig:Known-Bounds-Rnd}
\end{figure}
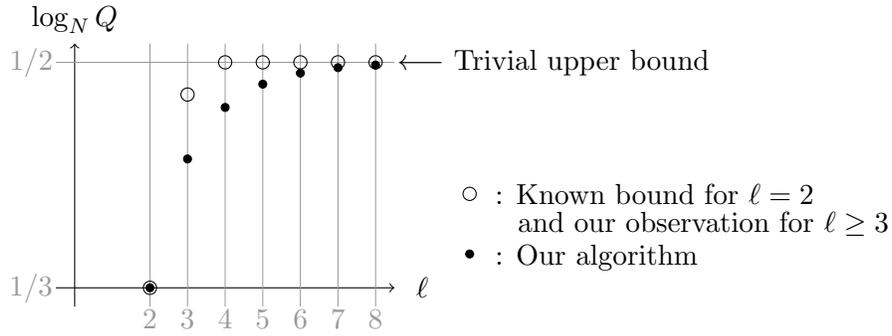

\begin{figure}[h]
\centering
\begin{tikzpicture}[scale=.5,
inner sep=1pt,
q-arb-bel/.style={rectangle,draw=black,fill=white,thin},
ours/.style={circle,draw=black,fill=black,thin},%
q-arb-amb/.style={circle,draw=black,fill=white,thin}%
]
\draw[->,thin] (-0.5,0) -- (8.5,0);
\draw[->,thin] (0,-1/2) -- (0,6.5);
\node[anchor=south] at (0,6.7) {$\log_NQ$};
\node[anchor=west] at (8.5,0) {$\ell$};
\draw[gray!80,very thin] (-1/2,9/2) node[anchor=east] {$3/4$} -- (8.5,9/2);
\draw[gray!80,very thin] (-1/2,2) node[anchor=east] {$1/3$} -- (8.5,2);
\draw[gray!80,very thin] (-1/2,3) node[anchor=east] {$1/2$} -- (8.5,3);
\draw[gray!80,very thin] (-1/2,6) node[anchor=east] {$1$} -- (8.5,6);
\foreach \x in {2,...,8} {\draw[gray!80,very thin] (\x,-1/2) node[anchor=north] {$\x$} -- (\x,6.5);}
\node [q-arb-bel] at (2,4) {};
\node [q-arb-bel] at (3,30/7) {}; 
\node [q-arb-bel] at (4,22/5) {};
\node [q-arb-bel] at (5,138/31) {};
\node [q-arb-bel] at (6,94/21) {};
\node [q-arb-bel] at (7,570/127) {};
\node [q-arb-bel] at (8,382/85) {};
\node [q-arb-amb] at (2,6*2/3) {};
\node [q-arb-amb] at (3,6*3/4) {}; 
\node [q-arb-amb] at (4,6*4/5) {};
\node [q-arb-amb] at (5,6*5/6) {};
\node [q-arb-amb] at (6,6*6/7) {};
\node [q-arb-amb] at (7,6*7/8) {};
\node [q-arb-amb] at (8,6*8/9) {};
\node [ours] at (2,6 * 1/3) {};
\node [ours] at (3,6 * 3/7) {};
\node [ours] at (4,6 * 7/15) {};
\node [ours] at (5,6 * 15/31) {};
\node [ours] at (6,6 * 31/63) {};
\node [ours] at (7,6 * 63/127) {};
\node [ours] at (8,6 * 127/255) {};
\node [q-arb-amb] at (11, 6 * 10/11) {};
\node [anchor=west] at (11.1, 6 * 10/11) {: Na\"ive application of Ambainis' algorithm};
\node [q-arb-bel] at (11, 6 * 3/4) {};
\node [anchor=west] at (11.1, 6 * 3/4) {: Na\"ive application of Belovs' algorithm};
\node [ours] at (11,6*1/2) {};
\node [anchor=west] at (11.1,6*1/2) {: Our algorithm};
\end{tikzpicture}
\caption{Quantum query complexity for finding an $\ell$-collision in H-Rnd setting: The label of the vertical axis, $\log_NQ$,  denotes the logarithm of  the number of queries to the base $N$, where $N$ is the size of the range of the function.}
\label{fig:Known-Bounds-Arb}
\end{figure}

\begin{table}[htb]
\begin{center}
\caption{Quantum query complexities of our algorithm for finding an $\ell$-collision for a random function for $\ell =2,\dots, 8$: Each fraction denotes the logarithm of the number of queries to the base $N$,
where $N$ is the size of codomain of the function,
and its approximate value is shown just below it.
The query complexity asymptotically approaches $1/2$ as $\ell$ increases.}
\label{tbl:numbers}
\renewcommand{\arraystretch}{2}
\begin{tabular}{c | c@{}c@{}c@{}c@{}c@{}c@{}c }
\toprule
$\ell$ & \makebox[1cm]{2}  & \makebox[1cm]{3} & \makebox[1cm]{4} & \makebox[1cm]{5} & \makebox[1cm]{6} & \makebox[1cm]{7} & \makebox[1cm]{8} \\
\midrule
{\large $\frac{2^{\ell-1}-1}{2^{\ell}-1}$} & {\Large $\frac{1}{3}$}& {\Large $\frac{3}{7}$} & {\Large $\frac{7}{15}$} & {\Large $\frac{15}{31}$} & {\Large $\frac{31}{63}$} & {\Large $\frac{63}{127}$} & {\Large $\frac{127}{255}$}  \\
 & {\small $0.3333..\ $}& {\small $0.4285..\ $} & {\small $0.4666..\ $} & {\small $0.4838..\ $} & {\small $0.4920..\ $} & {\small $0.4960..\ $} & {\small $0.4980..\ $}  \\ \bottomrule
\end{tabular}
\end{center}
\end{table}

\subsection{Updates from the Conference Versions}
Two preliminary versions of this work were presented at Asiacrypt 2017~\cite{DBLP:conf/asiacrypt/HosoyamadaSX17} and PQCrypto 2019~\cite{PQC19}.
The systematization of knowledge in this paper is from the former version,
whereas our main algorithm is from the latter version.
In this paper, we newly provide time/space complexity analysis of the algorithm.
The algorithm in the former version is not presented in this paper, since the algorithm in the latter version (this paper's main algorithm) improves upon the former one in terms of query/time complexity.
In terms of space complexity, however, the algorithm in the former version has a certain advantage, which will be discussed
in Section~\ref{subsec:A_Recursive_Multicollision-Finding_Algorithm}.
Our main algorithm involves intermediate measurements, which can be postponed to the end of the algorithm
at the sacrifice of additional space.
A discussion on these intermediate measurements is newly added in this paper.

\subsection{Paper Outline.}
The remaining of this paper is organized as follows.
\hnote{\autoref{sec:prelim} describes notations, definitions and a version of quantum search algorithm,
and
\autoref{sec:sok} systematizes the knowledge on previous quantum algorithms related to the multicollision-finding problem.
\autoref{sec:alg} then provides our quantum algorithm for finding a multiclaw and its complexity analysis.
\autoref{sec:discussions} explains how to run our algorithm without any intermediate measurements,
and discusses how to save work space at the sacrifice of query/time complexity.
\autoref{sec:conclusion} concludes this paper and presents some open problems.}

\subsection{Concurrent Work.}
Very recently, Liu and Zhandry~\cite{DBLP:journals/iacr/LiuZ18} showed that
 for every integer constant $\ell\ge 2$, $\Theta\big(N^{\frac{1}{2}(1-{1}/(2^\ell-1))}\big)$ quantum queries are both necessary and sufficient on average to find an $\ell$-collision with probability at least some constant for a random function. The comparisons are summarized as follows:
\begin{itemize}
\item Liu and Zhandry consider only the case where that $|X| \geq \ell |Y|$, where $X$ is the domain and $Y$ is the range, while we treat a more general case where $|X| \geq {\ell}
\cdot |Y|/{c_N}$ holds for any positive value $c_N\ge 1$ which is in $o\big(N^{{1}/({2^\ell-1})}\big)$. 
When $c_N$ is a constant, our bound $O\big(c_N \cdot N^{(2^{\ell-1}-1)/(2^\ell-1)}\big)$ matches theirs $O\big(N^{\frac{1}{2}(1-{1}/({2^\ell-1}))}\big)$.
\item They give a lower bound, which matches with their upper bound for $|X| \geq \ell |Y|$
 and ours for $|X| \geq {\ell}|Y|/{c_N} $ with any constant $c_N\ge 1$.
\end{itemize}
We finally note that our $\ell$-collision finding algorithm
 for the case $|X| \geq \ell |Y|$ with query complexity $O(N^{\frac{1}{2}\left(1-{1}/(2^\ell-1)\right)})$
 was reported in the Rump Session at Asiacrypt 2017, while Liu and Zhandry's (technical) paper appeared in 2018.

\section{Preliminaries}
\label{sec:prelim}
For a positive integer $K$, let $[K]$ denote the set $\{1,\dots,K\}$.
Unless otherwise noted, all sets are non-empty and finite.
For \tnote{sets} $X$ and $Y$, $\Func(X,Y)$ denotes the set of functions from $X$ to $Y$.
For each $f \in \Func(X,Y)$, we denote the set $\{f(x) \mid x \in X\}$ by $\mathrm{Im}(f)$.
For a {set} $X$, let $U(X)$ denote the uniform distribution over $X$.
For a distribution $\mathcal{D}$ on a set $X$,  we mean by $x \sim \mathcal{D}$ that  $x$ is a random variable that takes a value drawn from $X$ according to $\mathcal{D}$.
When we say that an oracle of a function $f \colon X \rightarrow Y$ is given, we consider the situation that each element $x\in X$ ($y\in Y$) is encoded as a distinct binary string of length $\ceil{\log_2 |X|}$ ($\ceil{\log_2 |Y|}$), and the oracle gate $O_f \colon \ket{x,z} \mapsto \ket{x,z \oplus f(x)}$ is available,
where $z\in 2^{\ceil{\log_2 |Y|}}$.
We assume that $\ell$ is an integer constant throughout this paper.

An \emph{$\ell$-collision} for a function $f\colon X \rightarrow Y$ is a tuple of elements $(x_1,\dots,x_\ell,y)$ in $X^\ell \times Y$ such that $f(x_i)=f(x_j)=y$ and $x_i \neq x_j$ for all $i,j \in [\ell]$ with $i\neq j$.
An $\ell$-collision is simply called a \emph{collision} for $l=2$, and called a \emph{multicollision} for $l \geq 3$.
Moreover, an \emph{$\ell$-claw} for $\ell$ functions $f_i\colon X_i\rightarrow Y$ for $i\in [\ell]$ is a tuple $(x_1,\dots,x_\ell,y) \in X_1 \times \cdots \times X_\ell \times Y$ such that 
$f_i(x_i)=y$ for all $i\in [\ell]$.
An $\ell$-claw is simply called a \emph{claw} for $l=2$, and called a \emph{multiclaw} for $l \geq 3$.

The problems of finding multicollisions or multiclaws are often studied in the \tnote{contexts} of both cryptography and quantum computation, but the problem \tnote{settings} of interest change depending on the \tnote{contexts}.
In the context of quantum computation, most problems are studied 
in terms of the \emph{worst-case} complexity over all possible inputs.
More concretely, let $\textsc{cost}^{\cal A}_{1-\varepsilon}(I)$ be the worst-case complexity,
over the inherent randomness in an algorithm ${\cal A}$, required for ${\cal A}$ to output a correct answer with error probability
at most $\varepsilon>0$ for input $I$.
Then, we are interested in finding ${\cal A}$
that minimizes $\max_I\textsc{cost}^{\cal A}_{1-\varepsilon}(I)$  for a prespecified small $\varepsilon$.
On the other hand, most problems in cryptography are studied in terms of  the \emph{average-case}
complexity over input distribution,
since the randomness of input is one of the most crucial notions in cryptography.
More concretely, 
let $\overline{\textsc{cost}}^{\cal B}_{1-\epsilon}({\cal D})$
be the average complexity of an algorithm ${\cal B}$,
over
the inherent randomness in ${\cal B}$
and the randomness of input subject to distribution ${\cal D}$,
required  for ${\cal B}$ to output a correct answer with probability 
at least $1-\epsilon$, where the probability is taken over both of the randomness.
Then, we are interested in  finding ${\cal B}$ that minimizes
$\overline{\textsc{cost}}^{\cal B}_{1-\epsilon}({\cal D})$.
By Markov's inequality, it is possible to upper-bound the worst-case complexity
at the sacrifice of some additional error.
For instance, consider query complexity as complexity measure.
If we restrict the maximum allowable number of queries that ${\cal B}$ makes
to $k\cdot \overline{\textsc{cost}}^{\cal B}_{1-\epsilon}({\cal D})$, then
${\cal B}$ outputs a correct answer
with probability at least $1-\epsilon-1/k$ 
by making 
at most $k\cdot \overline{\textsc{cost}}^{\cal B}_{1-\epsilon}({\cal D})$ queries 
for \emph{every} inputs.
In fact, the quantum query complexity given in the main theorem (\autoref{thm:main})
is stated in this form.

This paper focuses on the settings of interest in the context of cryptography.
Formally, our goal is to solve the following two problems.
\begin{problem}[Multicollision-finding problem, average case]\label{prob:mcoll}
Let $\ell\ge 2$ be a positive integer constant, and $X,Y$ denote non-empty finite sets.
Suppose that \tnote{a function} $F \colon X \rightarrow Y$ is chosen uniformly at random and \tnote{given} as a quantum oracle.
Then, find an $\ell$-collision for $F$.
\end{problem}
\begin{problem}[Multiclaw-finding problem, average case]\label{prob:mclaw}
Let $\ell\ge 2$ be a positive integer constant, and $X_1, \dots, X_\ell,Y$ denote \tnote{non-empty} finite sets.
Suppose that $\ell$ functions $f_i \colon X_i \rightarrow Y\  (i\in [\ell])$ are chosen independently and uniformly at random, and \tnote{given} as quantum oracles.
Then, find an $\ell$-claw for $f_1,\dots,f_\ell$.
\end{problem}

Roughly speaking, \autoref{prob:mcoll} is easier to solve than \autoref{prob:mclaw}.
Suppose \tnote{that} $F \colon X \rightarrow Y$ is a function, and we want to find an $\ell$-collision for $F$.
Let $X_1,\dots,X_\ell$ be disjoint subsets of $X$ such that 
$\bigcup_i X_i = X$.
If $(x_1,\dots,x_\ell,y)$ is an $\ell$-claw for $F|_{X_1}, \dots, F|_{X_\ell}$, then it is obviously an $\ell$-collision for $F$.
In general, an algorithm for finding an \tnote{$\ell$-claw} can be converted into one for finding an $\ell$-collision.
To be precise, the following lemma holds.
\begin{lemma}\label{lem:clawtocoll}
Let $X,Y$ be non-empty \tnote{finite} sets, and $X_1,\dots,X_\ell$ be (almost-)equal-sized disjoint subsets of $X$ such that 
$\bigcup_i X_i = X$.
If there exists a quantum algorithm $\A$ that solves \autoref{prob:mclaw} with probability at least $p$
for the sets $X_1,\dots,X_\ell,Y$ by making at most $q$ quantum queries, then there exists a quantum algorithm $\B$ that solves \autoref{prob:mcoll} with probability at least $p$ for the sets $X,Y$ by making at most $q$ quantum queries.
\end{lemma}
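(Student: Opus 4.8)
The plan is to build $\B$ as a thin wrapper around $\A$ that hands $\A$ the $\ell$ restrictions of $F$ to the blocks of the given partition. First I would fix the disjoint cover $X=\bigcup_{i\in[\ell]}X_i$ supplied by hypothesis and set $f_i:=F|_{X_i}\colon X_i\to Y$. On input the oracle $O_F$, the algorithm $\B$ simply runs $\A$, intercepting every query that $\A$ addresses to its $i$-th oracle $O_{f_i}$ and answering it with a single call to $O_F$. Since $X_i\subseteq X$ and $f_i(x)=F(x)$ for every $x\in X_i$, the transformation $\ket{x,z}\mapsto\ket{x,z\oplus f_i(x)}$ agrees with $O_F$ on the relevant inputs once we fix the injection that re-encodes an element of $X_i$ as an element of $X$ (reversibly compute the $X$-encoding into a fresh register, apply $O_F$, and uncompute). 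Thus one physical query to $O_F$ simulates one logical query to any $O_{f_i}$ exactly and unitarily, so the number of queries $\B$ makes equals the number $\A$ makes, namely at most $q$. When $\A$ halts with an $\ell$-claw $(x_1,\dots,x_\ell,y)$ for $(f_1,\dots,f_\ell)$, $\B$ outputs the same tuple.

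Correctness then splits into an output check and a probability check. For the output, $x_i\in X_i$ and the blocks are pairwise disjoint, so $x_1,\dots,x_\ell$ are automatically distinct, while $F(x_i)=f_i(x_i)=y$ for all $i$; hence $(x_1,\dots,x_\ell,y)$ is a genuine $\ell$-collision for $F$. For the probability, the key observation is distributional: if $F$ is drawn uniformly from $\Func(X,Y)$, then its values on the disjoint blocks $X_1,\dots,X_\ell$ are mutually independent and each uniform, so $(F|_{X_1},\dots,F|_{X_\ell})$ has exactly the law of $\ell$ independent, uniformly random functions $f_i\colon X_i\to Y$---which is precisely the input distribution of \autoref{prob:mclaw}. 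Therefore the probability that $\A$ returns a valid $\ell$-claw, averaged over both its inherent randomness and the draw of $F$, is at least $p$, and by the output check each such claw is a valid $\ell$-collision. Consequently $\B$ solves \autoref{prob:mcoll} with probability at least $p$ using at most $q$ queries.

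I expect the only genuinely delicate point to be the quantum oracle simulation: one must verify that routing $\A$'s queries to the separate oracles $O_{f_i}$ through the single oracle $O_F$ is an exact unitary simulation that preserves the query count (each logical query to some $f_i$ costing exactly one physical query to $F$), rather than merely a classical query-by-query matching. Once the encodings are fixed so that each $X_i$ embeds into $X$ and the two range encodings of $Y$ coincide, this is immediate from the pointwise agreement of $f_i$ and $F$ on $X_i$; the distributional and combinatorial steps are then routine.
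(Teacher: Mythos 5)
Your proposal is correct and follows essentially the same route as the paper, which proves this lemma only by the brief observation preceding its statement: partition $X$ into the disjoint blocks $X_i$, run the claw-finder on the restrictions $F|_{X_i}$ (answering each query with one query to $F$), and note that disjointness makes any resulting claw an $\ell$-collision while uniformity of $F$ makes the restrictions independent and uniform. Your write-up merely spells out the oracle-simulation and distributional details that the paper leaves implicit.
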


How to measure the size of a problem also changes depending on which context we are in.
In the context of cryptography, the problem size is often regarded as the size of the range of input functions rather than the sizes of their domains, since the domains of cryptographic functions such as hash functions are much larger than their ranges.
Hence, we regard the range size $|Y|$ as the size of \autoref{prob:mcoll} (and~\autoref{prob:mclaw}) when we analyze the complexity of quantum algorithms.

\subsection{The Grover Search and Its Generalization}
As a main tool for developing quantum algorithms, we use the quantum database search algorithm that was originally developed by Grover~\cite{Gr} and later generalized by Boyer, Brassard, H{\o}yer, and Tapp~\cite{boyer1998tight} for mult-target cases.

\begin{theorem}[\cite{boyer1998tight}]\label{thm:BBHT}
Let $X$ be a non-empty finite set and $f \colon  X \rightarrow \{0,1\}$ be a function such that $t / |X| < 17/81$, where $t = |f^{-1}(1)| $.
Then, there exists a quantum search algorithm $\BBHT$ that, for given $f$ as an oracle, finds $x$ such that $f(x)=1$ with an expected number of quantum queries at most 
$
\frac{4|X|}{\sqrt{(|X|-t)t}} \leq \frac{9}{2} \cdot \sqrt{\frac{|X|}{t}}.
$
If $f^{-1}(1) = \emptyset$, then $\BBHT$ runs forever.
\end{theorem}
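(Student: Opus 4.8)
The plan is to realize $\BBHT$ as the standard exponential-search-over-Grover-rotations procedure and to extract the expected query count from a single geometric series, with the stated constants emerging from a trigonometric identity for the \emph{randomized} success probability. Throughout I write $\sin^2\theta = t/\card{X}$ with $\theta\in(0,\pi/2)$, so that $\sqrt{(\card{X}-t)\,t}=\card{X}\cos\theta\sin\theta=\tfrac{\card{X}}{2}\sin(2\theta)$ and hence the target bound is exactly $8/\sin(2\theta)$.

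First I would recall the Grover machinery underlying the algorithm. Starting from the uniform superposition over $X$ and applying the Grover iterate $j$ times (each iterate costs one query to $f$), a measurement returns a marked element $x\in f^{-1}(1)$ with probability precisely $\sin^2\big((2j+1)\theta\big)$. Since $t$, and thus $\theta$, is unknown, no single $j$ can be fixed, so the crucial device is to pick $j$ uniformly at random from $\{0,1,\dots,m-1\}$ for a growing parameter $m$. Using $\sin^2 x=(1-\cos 2x)/2$ and summing the resulting geometric series $\sum_j\cos\big((4j+2)\theta\big)=\sin(4m\theta)/(2\sin 2\theta)$ gives the exact averaged success probability
\[
P_m \;=\; \frac{1}{m}\sum_{j=0}^{m-1}\sin^2\big((2j+1)\theta\big)\;=\;\frac12-\frac{\sin(4m\theta)}{4m\sin(2\theta)} .
\]
Consequently $\lvert P_m-\tfrac12\rvert\le 1/(4m\sin 2\theta)$, so that once $m$ exceeds the critical value $m_0:=1/\sin(2\theta)$ we have $P_m\ge 1/4$: a single round then succeeds with constant probability.

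Next I would specify the procedure and bound its expected cost. $\BBHT$ initializes $m\leftarrow 1$ and, in each round, chooses $j$ uniformly from $\{0,\dots,\ceil{m}-1\}$, runs $j$ Grover iterates, measures a candidate $x$, and spends one extra query to test $f(x)=1$; on failure it updates $m\leftarrow\min(\lambda m,\sqrt{\card{X}})$ for a fixed schedule constant $\lambda$ with $1<\lambda<4/3$ (e.g.\ $\lambda=6/5$). For the expected query count I would split the rounds at the first round in which $m\ge m_0$. Before that point the costs $\approx m/2$ grow geometrically and sum to $O(m_0)$; after that point each round independently succeeds with probability $\ge 1/4$, so the probability of reaching the $k$-th late round is $\le (3/4)^k$ while its cost is $\approx (m_0/2)\lambda^k$. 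The total expected Grover cost is then dominated by $\tfrac{m_0}{2}\sum_k (3\lambda/4)^k$, which converges exactly because $\lambda<4/3$ forces $3\lambda/4<1$; the $+1$ verification query per round is a lower-order $O(\log m_0)$ term absorbed into the constant. A careful accounting with the chosen $\lambda$ yields expected queries at most $8/\sin(2\theta)=4\card{X}/\sqrt{(\card{X}-t)\,t}$, establishing the first bound. For the empty case $f^{-1}(1)=\emptyset$ I note $t=0$, so $\theta=0$: the oracle reflection is the identity, every round leaves the uniform superposition unchanged, the verification query always fails, and the loop never halts, as claimed.

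Finally I would derive the clean bound from the sharp one. Writing $4\card{X}/\sqrt{(\card{X}-t)\,t}=\tfrac{4}{\sqrt{1-t/\card{X}}}\cdot\sqrt{\card{X}/t}$, the stated inequality $\le \tfrac92\sqrt{\card{X}/t}$ is equivalent to $\sqrt{1-t/\card{X}}\ge 8/9$, i.e.\ $t/\card{X}\le 17/81$; this is exactly the hypothesis, which pins down where the constant $17/81$ comes from. The hard part will be the middle step: controlling the expected number of queries across the geometrically lengthening rounds so that the constant comes out as sharply as $8/\sin(2\theta)$ rather than merely $O(1/\sin 2\theta)$. This requires the precise choice of $\lambda$ in the convergence condition $3\lambda/4<1$ together with tight bookkeeping of the failure probabilities, and is the only genuinely delicate part of the argument; the rest is the exact amplitude-amplification identity and routine algebra.
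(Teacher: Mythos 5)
The paper itself does not prove this theorem; it imports it directly from \cite{boyer1998tight}, so the only meaningful benchmark is the original Boyer--Brassard--H{\o}yer--Tapp argument. Your proposal is essentially a reconstruction of exactly that argument: the success probability $\sin^2((2j+1)\theta)$ after $j$ Grover iterations, the averaged identity $P_m=\frac12-\frac{\sin(4m\theta)}{4m\sin(2\theta)}$ for a uniformly random iteration count, the critical value $m_0=1/\sin(2\theta)$ beyond which $P_m\ge 1/4$, the exponential schedule with ratio $\lambda\in(1,4/3)$, and the closing equivalence $4/\sqrt{1-t/\card{X}}\le 9/2 \Leftrightarrow t/\card{X}\le 17/81$ are all the ingredients of the cited proof, and each step you state is correct, including the treatment of $f^{-1}(1)=\emptyset$.

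The one genuine soft spot is the step you explicitly defer, and it does not work as asserted for your example schedule. With $\lambda=6/5$, the natural worst-case bookkeeping gives a pre-critical cost of at most $\frac{\lambda}{2(\lambda-1)}m_0=3m_0$ and a post-critical cost of at most $\frac{\lambda/2}{\,1-3\lambda/4\,}m_0=6m_0$ (the first critical round may have $m$ as large as $\lambda m_0$), for a total of $9m_0=\tfrac{9}{2}\card{X}/\sqrt{(\card{X}-t)t}$, which overshoots the claimed $8m_0=4\card{X}/\sqrt{(\card{X}-t)t}$. The constant $8$ is in fact the minimum over $\lambda$ of $\frac{\lambda}{2(\lambda-1)}+\frac{2\lambda}{4-3\lambda}$, attained exactly at $\lambda=8/7$, where both geometric sums equal $4m_0$; so either take $\lambda=8/7$ or reproduce the finer accounting of the original paper rather than leaving it as ``careful accounting with the chosen $\lambda$.'' A second, minor point: the one verification query per round adds an expected $\Theta(\log_\lambda m_0)$ queries, which is lower order but cannot literally be ``absorbed into the constant'' if the bound $4\card{X}/\sqrt{(\card{X}-t)t}$ is to be exact as stated; the clean resolution (as in the source) is to count Grover iterations, or to leave explicit slack in the iteration bound. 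Neither issue affects the $O(\sqrt{\card{X}/t})$ asymptotics, but both matter for the specific constants in the statement.
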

We also use this theorem  in the following form.
\begin{corollary}\label{cor:MTPS}
Let $X,Y$ be non-empty finite sets, $f \colon X \rightarrow Y$ be a function, and $Y' \subset Y$ be a non-empty subset.
Then, there exists a quantum search algorithm $\MTPS$ that, for given $f$ as an oracle, finds $x$ such that $f(x) \in Y'$ with an expected number of quantum queries at most $9\sqrt{5\card{X}/\card{f^{-1}(Y')}}$.
\end{corollary}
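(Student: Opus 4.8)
The plan is to reduce \autoref{cor:MTPS} directly to \autoref{thm:BBHT} by turning the membership test ``$f(x)\in Y'$'' into a Boolean search. First I would define $g\colon X\to\{0,1\}$ by $g(x)=1$ iff $f(x)\in Y'$, so that $g^{-1}(1)=f^{-1}(Y')$ and the number of marked points is exactly $t:=\card{f^{-1}(Y')}$. A single evaluation of the bit oracle $O_g$ can be realized with two calls to $O_f$: query $O_f$ to write $f(x)$ into a fresh ancilla, compute the indicator bit $[f(x)\in Y']$ (a query-free classical test, since $Y'$ is a fixed subset), and query $O_f$ once more to uncompute the ancilla and restore the workspace. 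Hence each $\BBHT$ query to $g$ costs exactly two queries to $f$, and any $x$ that $\BBHT$ reports with $g(x)=1$ is a valid answer with $f(x)\in Y'$.

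The hard part is that \autoref{thm:BBHT} only guarantees termination with the stated query bound under the promise $t/\card{X}<17/81$, whereas \autoref{cor:MTPS} places no upper bound on $t=\card{f^{-1}(Y')}$, which may be as large as $\card{X}$ (in which case $\BBHT$ applied on $X$ offers no guarantee and could run forever). To remove this restriction I would pad the domain. Let $X''=X\sqcup D$ where $D$ is a block of $4\card{X}$ dummy symbols, so $\card{X''}=5\card{X}$, and extend $g$ to $g''\colon X''\to\{0,1\}$ by $g''|_X=g$ and $g''|_D\equiv 0$. The number of marked points is unchanged, $\card{(g'')^{-1}(1)}=t$, while $t/\card{X''}=t/(5\card{X})\le 1/5<17/81$ holds for \emph{every} $t\le\card{X}$. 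Thus the promise of \autoref{thm:BBHT} is met on $X''$ regardless of how large $t$ is. The factor $5$ is exactly what is needed and explains the $\sqrt{5}$ in the bound: $1/5=0.2$ lies below the threshold $17/81\approx0.2099$, whereas $1/4$ does not.

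Finishing, I would run $\BBHT$ on $(X'',g'')$. It finds a marked point with an expected number of $g''$-queries at most $\tfrac{9}{2}\sqrt{\card{X''}/t}=\tfrac{9}{2}\sqrt{5\card{X}/t}$; multiplying by the two $f$-queries per $g''$-evaluation yields the claimed expected bound $9\sqrt{5\card{X}/t}$. Correctness is immediate: a dummy in $D$ is never marked, so any returned $x$ lies in $X$ and satisfies $f(x)\in Y'$. The oracle $O_{g''}$ keeps the two-query count by conditioning the call to $O_f$ on the event ``$x\in X$'' and returning $0$ on the dummy block without querying $f$. Finally, if $f^{-1}(Y')=\varnothing$ the stated bound is infinite and there is nothing to prove, consistent with $\BBHT$ running forever in that degenerate case. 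This whole argument is routine once the padding trick is in place, which I expect to be the only nonobvious step.
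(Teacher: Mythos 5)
Your proposal is correct and is essentially the paper's own proof: the paper defines $F_{Y'}\colon\{1,\dots,5\}\times X\to\{0,1\}$ with $F_{Y'}(\alpha,x)=1$ iff $\alpha=1$ and $f(x)\in Y'$, which is exactly your padding of $X$ to a set of size $5\card{X}$ so that the marked fraction is at most $1/5<17/81$, and it likewise charges two $f$-queries per evaluation of the Boolean oracle to get $9\sqrt{5\card{X}/\card{f^{-1}(Y')}}$. Your additional remarks on why the factor $5$ (rather than $4$) is needed and on the empty-preimage case are consistent with, and slightly more explicit than, the paper's argument.
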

\begin{proof}
Let $F_{Y'} \colon \{1,\dots,5\} \times X \rightarrow \{0,1\}$ be the boolean function defined as $F_{Y'}(\alpha,x) = 1$ if and only if $\alpha=1$ and $f(x) \in Y'$.
A quantum circuit that computes $F_{Y'}$ can be implemented with two oracle calls to the oracle of $f$.
Then, run $\BBHT$ on $F_{Y'}$.
Since $\card{\{1,\dots,5\} \times X} = 5\card{X}$ and $\card{F_{Y'}^{-1}(1)} \leq \card{X} \leq 17/81 \cdot \card{\{1,\dots,5\} \times X}$ always hold, the corollary follows from~\autoref{thm:BBHT}. 
\end{proof}

\subsection{Tail Bounds of Probability Distributions}
\begin{theorem}[McDiarmid's Inequality (Theorem 13.7 in \cite{mitzenmacher2017probability})]
\label{th:McDiarmid}
Let $M$ be a positive integer, and $\Phi \colon Y^{\times M} \colon \rightarrow \mathbb{N}$ be a $1$-Lipschitz function.
Let $\{y_i\}_{1 \leq i \leq M}$ be the set of independent random variables that take values in $Y$.
Let $\mu$ denote the expectation value $\E_{y_1,\dots,y_M}\left[ \Phi(y_1,\dots,y_M)  \right]$.
Then
\begin{equation*}
\Pr_{y_1,\dots,y_M}\left[ \Phi(y_1,\dots,y_M) \geq \mu + \lambda \right] \leq 2e^{-2\lambda^2/M}
\end{equation*}
holds.
\end{theorem}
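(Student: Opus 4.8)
The plan is to prove the bound by the martingale method: set up the Doob martingale associated with $\Phi$ and apply an Azuma--Hoeffding-style exponential-moment argument. First I would fix the independent inputs $y_1,\dots,y_M$ and define the Doob martingale $Z_i = \E[\Phi(y_1,\dots,y_M)\mid y_1,\dots,y_i]$ for $0\le i\le M$, so that $Z_0=\mu$ is a constant and $Z_M=\Phi(y_1,\dots,y_M)$. The sequence $(Z_i)_{i=0}^M$ is a martingale with respect to the filtration generated by $y_1,\dots,y_i$, and the quantity to control, $\Phi-\mu$, is exactly the telescoping sum $Z_M-Z_0=\sum_{i=1}^M D_i$ of the differences $D_i=Z_i-Z_{i-1}$, each of which has conditional mean zero given the past.

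The key step is to confine each difference $D_i$ not merely in absolute value but within a conditionally determined interval of length at most $1$. Conditioned on $y_1,\dots,y_{i-1}$, set $g(y)=\E[\Phi\mid y_1,\dots,y_{i-1},y_i=y]$ and let $B_i^-=\inf_y g(y)$, $B_i^+=\sup_y g(y)$. Because $y_i$ is independent of the remaining coordinates, $g(y)-g(y')$ is an average over $y_{i+1},\dots,y_M$ of a difference of two values of $\Phi$ that differ only in the $i$-th argument; by the $1$-Lipschitz hypothesis this integrand is bounded by $1$, so $B_i^+-B_i^-\le 1$. Since both $Z_i=g(y_i)$ and $Z_{i-1}=\E_{y_i}[g(y_i)]$ lie in $[B_i^-,B_i^+]$, the difference $D_i=Z_i-Z_{i-1}$ lies in the interval $[B_i^- - Z_{i-1},\,B_i^+ - Z_{i-1}]$, which has length at most $1$ and whose endpoints are measurable with respect to $y_1,\dots,y_{i-1}$.

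With this interval bound in hand I would run the standard Chernoff argument. For $s>0$, Markov's inequality gives $\Pr[Z_M-Z_0\ge\lambda]\le e^{-s\lambda}\,\E[e^{s(Z_M-Z_0)}]$, and I would expand the expectation by conditioning on the filtration one step at a time. Since $D_i$ has conditional mean zero and lies in an interval of length at most $1$, Hoeffding's lemma yields $\E[e^{sD_i}\mid y_1,\dots,y_{i-1}]\le e^{s^2/8}$; applying the tower property $M$ times gives $\E[e^{s(Z_M-Z_0)}]\le e^{Ms^2/8}$, hence $\Pr[Z_M-Z_0\ge\lambda]\le e^{-s\lambda+Ms^2/8}$. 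Optimizing with $s=4\lambda/M$ produces $e^{-2\lambda^2/M}$, which is even stronger than the claimed $2e^{-2\lambda^2/M}$ (the spare factor of $2$ is harmless slack, exactly what one recovers by summing the two one-sided tails into a two-sided statement).

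I expect the main obstacle to be the sharpness of the constant in the exponent rather than the overall structure. Merely bounding $|D_i|\le 1$ and invoking the plain Azuma inequality would only yield $e^{-\lambda^2/(2M)}$; to reach the tight exponent $2\lambda^2/M$ one must exploit that $D_i$ is confined to an interval of length $1$, not just that $|D_i|\le 1$, and feed this into Hoeffding's lemma, whose factor $(b-a)^2/8=1/8$ is responsible for the factor-of-four improvement. Making the interval bound $B_i^+-B_i^-\le 1$ fully rigorous --- in particular justifying it through the independence of the coordinates together with the bounded-differences property --- is the step that requires the most care.
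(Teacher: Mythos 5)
Your proof is correct, but note that the paper does not prove this statement at all: it is imported verbatim as Theorem~13.7 of the cited textbook of Mitzenmacher and Upfal and used as a black box. Your argument --- the Doob martingale $Z_i=\E[\Phi\mid y_1,\dots,y_i]$, the observation that each difference $D_i$ is confined to a predictable interval of length at most $1$ (not merely $|D_i|\le 1$), Hoeffding's lemma giving $\E[e^{sD_i}\mid\mathcal{F}_{i-1}]\le e^{s^2/8}$, and optimization at $s=4\lambda/M$ --- is exactly the standard proof given in that source, and you correctly identified that the interval refinement is what upgrades plain Azuma's exponent $\lambda^2/(2M)$ to the stated $2\lambda^2/M$; your one-sided bound $e^{-2\lambda^2/M}$ is in fact slightly stronger than the two-sided form quoted in the paper.
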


\begin{theorem}[Theorem 1 in~\cite{hush2005concentration}]
\label{th:BoundOnHyperGeometricDistribution}
Let $K=K(n_1,n,m)$ denote the hypergeometric random variable describing the process of counting how many defectives are selected when $n_1$ items are randomly selected without replacement from $n$ items among which there are $m$ defective ones. Let $\lambda \geq 2$.
Then
\begin{equation*}
\Pr\left[K - \E[K] < -\lambda \right] < e^{-2\alpha_{n_1,n,m}(\lambda^2-1)}
\end{equation*}
holds, where
\begin{equation*}
\alpha_{n_1,m,n} = \max\left( \left( \frac{1}{n_1+1} + \frac{1}{n-n_1+1} \right), \left(\frac{1}{m+1} + \frac{1}{n-m+1} \right)\right).
\end{equation*}
\end{theorem}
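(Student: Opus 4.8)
Since this inequality is quoted verbatim as Theorem~1 of Hush and Scovel, the intended justification is really a citation; what follows is a reconstruction of how I would prove it from scratch. The plan is to combine three ingredients: the urn symmetry of the hypergeometric law (which accounts for the $\max$ defining $\alpha_{n_1,n,m}$), a sequential representation of $K$ as a sum of exchangeable indicators, and the exponential-moment (Chernoff) method sharpened by the log-concavity of the hypergeometric PMF.

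First I would record the symmetry $K(n_1,n,m) \stackrel{d}{=} K(m,n,n_1)$: counting defectives among $n_1$ draws from a population of $n$ containing $m$ defectives is distributionally identical, by exchanging the roles of ``drawn'' and ``defective'' in the underlying $2\times 2$ table, to counting defectives among $m$ draws from a population containing $n_1$ defectives. Because the bound I aim to prove has the form $e^{-2\beta(\lambda^2-1)}$ with $\beta = \frac{1}{n_1+1}+\frac{1}{n-n_1+1}$, applying it in each of the two representations and keeping the stronger one yields exactly $\beta = \alpha_{n_1,n,m}$. Thus it suffices to prove the one-sided bound with the single expression $\frac{1}{n_1+1}+\frac{1}{n-n_1+1}$ in the exponent, and the symmetry promotes it to the stated $\max$.

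For the one-sided bound I would write $K=\sum_{i=1}^{n_1} Z_i$, where $Z_i$ indicates that the $i$-th ball drawn without replacement is defective, and attack $\Pr[K-\E[K] < -\lambda]$ by the Chernoff method: $\Pr[K \le \E[K]-\lambda]\le e^{s(\E[K]-\lambda)}\,\E[e^{-sK}]$ for $s>0$. The crude route bounds $\E[e^{-sK}]$ via Hoeffding's observation that sampling without replacement is no less concentrated than sampling with replacement, reducing to a binomial lower tail; this already delivers a bound of the right Gaussian shape, $e^{-\Theta(\lambda^2/n_1)}$. The sharper constants --- the factor $2$, the integer shifts $n_1+1$ and $n-n_1+1$, and the $\lambda^2-1$ --- instead come from using the exact variance of $K$ together with the log-concavity of its PMF: the tail sum $\sum_{k\le \E[K]-\lambda}\Pr[K=k]$ is dominated by a geometric series whose ratio is controlled by the monotone likelihood ratio $\Pr[K=k-1]/\Pr[K=k]$, and summing this series (equivalently, bounding a discrete sum by the corresponding integral) is what produces the $+1$ shifts and the $-1$ correction.

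The hard part is precisely this last step: recovering the \emph{exact} stated constants rather than a merely order-optimal exponent. A bounded-differences argument (McDiarmid-type, cf.\ \autoref{th:McDiarmid}, applied in its sampling-without-replacement form) or a plain Hoeffding estimate gives the correct shape effortlessly but with looser constants; pinning down $2\alpha_{n_1,n,m}(\lambda^2-1)$ requires carefully tracking the discrete-to-continuous approximation in the geometric tail sum and checking the direction of the log-concavity inequalities uniformly over the whole lower tail. For the applications in this paper only the order of the exponent is used, so the crude reduction to the binomial lower tail would in fact be enough; I would carry out the refined computation only to reproduce the inequality exactly as cited.
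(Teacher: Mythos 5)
This statement is imported verbatim from Hush and Scovel and the paper offers no proof of its own, only the citation, so your primary justification (defer to the source) coincides exactly with the paper's. Your reconstruction sketch --- the urn symmetry $K(n_1,n,m)\stackrel{d}{=}K(m,n,n_1)$ to account for the $\max$ in $\alpha_{n_1,n,m}$, plus log-concavity of the hypergeometric PMF to dominate the lower tail by a geometric series --- is consistent with the actual Hush--Scovel argument, and your stopping short of extracting the exact constants $2\alpha_{n_1,n,m}(\lambda^2-1)$ is immaterial here, since the paper uses the theorem purely as a black box.
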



\section{Systematization of Knowledge on Quantum Multicollision Algorithms} 
\label{sec:sok}
In the classical setting, an $\ell$-collision on a hash function can be found with $O(N^{(\ell-1)/\ell})$ queries.
However, the problem has not received much attention in the quantum setting. This section surveys previous work and integrates the findings of different researchers to make several new observations on this topic.
In what follows, we consider the problem of finding a (multi)collision of a function $F : X \rightarrow Y$.

\subsection{Survey of Previous Work}
\label{sec:survey}
We review the algorithm \algo{BHT}~\cite{BHT97}, since our algorithm explained in \autoref{sec:alg} is its extension. We also survey previous studies, classifying them in two types: element $\ell$-distinctness problem (D-Arb), and collision-finding problem on random functions (D-Rnd and H-Rnd).

\subsubsection{BHT: Collision finding problem on $\ell$-to-one functions.}
\label{subsubsec:BHT_for_collision}
For simplicity, we describe \algo{BHT} only for the case $\ell =2$. Let $X, Y$ be sets that satisfy $\card{X} = 2 \cdot \card{Y} = 2N$, and $F \colon X \to Y$ be a $2$-to-one function.

The basic idea of \algo{BHT} is as follows. First, we choose a parameter $k$ ($k = N^{1/3}$ will turn out to be optimal) and a subset $X' \subset X$ of cardinality $k$. We then make a list $L = \{ (x,F(x)) \}_{x \in X'}$. 
If $L$ contains  pairs $(x,F(x))$ and $(x',F(x'))$ such that $F(x)=F(x')$, then we are done.
Suppose that no such pairs exist.
We then use the $\BBHT$ algorithm to find an element $x \in X\setminus X'$ such that there exists $x_0 \in X'$ that satisfies $(x_0,F(x)) \in L$, i.e., we try to extend a pair $(x_0,F(x_0)) \in L$ to a collision $(\{x,x_0\},F(x_0))$ for a certain $x\in X\setminus X'$. The precise description of \algo{BHT} is as follows.\\

\noindent\textbf{Algorithm $\algo{BHT}(F,k)$.}
\begin{enumerate}
\item Choose an arbitrary subset $X' \subset X$ of cardinality $k$. 
\item Make a list $L = \big\{ \big(x,F(x)\big) \big\}_{x \in X'}$ by querying $x \in X'$ to $F$. 
\item Sort $L$ in accordance with $F(x)$.
\item Check whether $L$ contains a $2$-collision, i.e., there exist pairs $(x,F(x)),(x',F(x'))$ in $L$ such that $x \neq x' $ and $F(x) = F(x')$. If so, output the $2$-collision $(\{x,x'\},F(x))$. Otherwise proceed to the next step.
\item Construct the oracle $H \colon X \to \{0,1\}$ by defining $H(x)=1$ if and only if there exists $x_0 \in X'$ such that $(x_0,F(x)) \in L$ and $x \neq x_0$.
\item Run $\algo{BBHT}(F)$ to find $\tilde{x} \in X$ such that $H(\tilde{x}) = 1$. 
\item Find $x_0 \in X'$ that satisfies $F(\tilde{x}) = F(x_0)$.
Output the $2$-collision $(\{\tilde{x},x_0\},F(x_0))$.
\end{enumerate}

This algorithm makes $k$ quantum queries in Step~2 and $O(\sqrt{N/k})$ quantum queries in Step~6 (in fact, in constructing the list $L$, we need no advantage of quantum calculation, so queries in Step~2 can also be made \emph{classically} if we are allowed to access a classical oracle of $F$). Thus, the total number of quantum queries is $O(k + \sqrt{N/k})$, which is minimized when $k=N^{1/3}$. Brassard~\etal~provide the following theorem~\cite{BHT97}.
\begin{theorem}
[{\cite[Theorem~1]{BHT97}}] \label{bht} Suppose that $X$ and $Y$ are finite sets that satisfy $\card{X} = 2 \cdot \card{Y}=2N$, and $F \colon X \to Y$ is a two-to-one function. For $k\in [N]$,
\algo{BHT} finds a $2$-collision of $F$ with an expected quantum query complexity $O( k + \sqrt{N/k})$
and space complexity $\tilde{O}(k)$. In particular, when $k = N^{1/3}$, \algo{BHT} finds a $2$-collision of $F$ with expected quantum query complexity $O({N}^{1/3})$ and space complexity $\tilde{O}(N^{1/3})$.
\end{theorem}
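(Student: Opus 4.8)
The plan is to treat the algorithm $\algo{BHT}(F,k)$ as given and simply carry out a complexity analysis of its seven steps, after first establishing correctness. I would split the correctness argument according to the branch taken in Step~4. If $L$ already contains two pairs with equal second coordinate, Step~4 outputs a genuine $2$-collision and we are done. Otherwise the $k$ images $\{F(x):x\in X'\}$ are pairwise distinct, and since $F$ is two-to-one each such image $y$ has exactly two preimages, exactly one of which lies in $X'$ (if both lay in $X'$ they would form a collision in $L$, contradicting the branch assumption). The remaining preimage lies in $X\setminus X'$, and I would observe that these $k$ ``partner'' preimages are precisely the elements marked by $H$: the condition $H(x)=1$ requires some $x_0\in X'$ with $F(x_0)=F(x)$ and $x\ne x_0$, which by two-to-oneness occurs exactly for the $k$ partners. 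Hence $t:=\bcard{H^{-1}(1)}=k\ge 1$, so the search in Step~6 does not run forever and returns some $\tilde x$ with $F(\tilde x)=F(x_0)$ and $\tilde x\ne x_0$; thus $(\{\tilde x,x_0\},F(x_0))$ is a valid $2$-collision.

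For the query count, Step~2 makes exactly $k$ queries to build $L$, while Steps~3, 4, 5, and 7 perform only classical manipulations of the stored list (the witness $x_0$ being recovered by a lookup in the sorted $L$), using no queries to $F$. The sole remaining quantum cost is the search in Step~6, which I would bound by applying \autoref{cor:MTPS} to the Boolean oracle $H$ with target $Y'=\{1\}$. Each evaluation of $H$ costs a single query to $F$ (to obtain $F(x)$) followed by a classical membership test in the sorted list $L$, hence $O(1)$ queries to $F$ per $H$-evaluation. Since $\bcard{H^{-1}(1)}=t=k$ and $\card{X}=2N$, \autoref{cor:MTPS} yields an expected $O\big(\sqrt{\card{X}/t}\big)=O\big(\sqrt{N/k}\big)$ queries. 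Invoking \autoref{cor:MTPS} rather than \autoref{thm:BBHT} directly is convenient here because its fixed domain-padding makes the ratio always admissible, sidestepping the threshold hypothesis $t/\card{X}<17/81$, which need not hold when $k$ is large. Summing the two contributions gives an expected total of $O\big(k+\sqrt{N/k}\big)$ quantum queries.

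For the space, the dominant cost is storing the sorted list $L$ of $k$ pairs, each encoded in $O(\log N)$ bits, together with the $O(\log N)$-qubit search register and the workspace implementing $H$ as a membership test against $L$; all of these fit in $\tilde{O}(k)$ space. Finally I would optimize: the expression $k+\sqrt{N/k}$ is minimized up to constants when the two terms balance, i.e.\ when $k^{3}=N$, giving $k=N^{1/3}$ and hence expected query complexity $O(N^{1/3})$ and space $\tilde{O}(N^{1/3})$.

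The only genuinely delicate point, which I would flag as the main obstacle, is the exact determination $t=k$ together with the guarantee that $H^{-1}(1)$ contains no ``useless'' element of $X'$, so that the returned $\tilde x$ is always a fresh partner rather than an element already present in $L$. This hinges squarely on the two-to-one hypothesis and on the branch assumption that $L$ is collision-free; once these are pinned down, the $\algo{BBHT}/\MTPS$ bound and the balancing of $k$ are routine.
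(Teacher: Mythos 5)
Your proposal is correct and follows essentially the same route as the paper, which only sketches the argument (Step~2 costs $k$ queries, Step~6 costs $O(\sqrt{N/k})$ by the multi-target Grover search, and balancing gives $k=N^{1/3}$) and otherwise defers to the original BHT paper. Your writeup simply fills in the details the paper leaves implicit — the correctness case split on Step~4, the exact count $t=k$ of marked elements from the two-to-one hypothesis, and the use of \autoref{cor:MTPS} to avoid the $t/\card{X}<17/81$ threshold in \autoref{thm:BBHT} when $k$ is large — all of which are sound.
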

We also note that the expected time complexity of the algorithm in the above theorem
is the same order as the quantum query complexity up to a logarithmic factor,
assuming that a single oracle access takes a constant time.

\subsubsection{Element $\ell$-distinctness problem ($\ell$-collisions in D-Arb).}
Consider \emph{the element $\ell$-distinctness problem}, in which we are given access to the oracle $F \colon X \to Y$ to find whether there exist distinct $x_1,\dots,x_\ell$ such that $F(x_1) = \dots = F(x_\ell)$, i.e., there exits an $\ell$-collision of $F$. Note that $F$ obviously has an $\ell$-collision if $|X| > (\ell - 1)|Y|$, and the element $\ell$-distinctness problem considers the collision-detecting problem on \emph{database} rather than a hash function.

Ambainis \cite{Amb07} proposed a quantum algorithm based on quantum walks that solves the element $\ell$-distinctness problem with $O\big(\card{X}^{\ell/(\ell+1)}\big)$ queries. His algorithm not only decides whether there exists an $\ell$-collision but also finds an actual $\ell$-collision $(\{x_1, \dots, x_\ell\},y)$. It can thus be applied to collision-finding  in the case of $|X| > (\ell - 1)\card{Y}$. 
Subsequently, Belovs~\cite{Belovs12} improved the query complexity 
to $O\big(\card{X}^{1-2^{\ell-2}/(2^\ell-1)}\big) = o(\card{X}^{3/4})$.%
\footnote{The best known time complexity remained $\tilde{O}(\card{X}^{\ell/(\ell +1)})$~\cite{Amb07}.
Subsequently,  Belovs, Childs, Jeffery, Kothari, and Magniez~\cite{BCJKM13} 
and Jeffery~\cite{Jeffery14}
improved the time complexity for $\ell = 3$ to $\tilde{O}(\card{X}^{5/7})$.
However, the quantum query complexity is still $\tilde{O}(\card{X}^{5/7})$,
which is the same as Belovs's bound~\cite{Belovs12}: $\tilde{O}\big(\card{X}^{1-2^{3-2}/(2^3-1)}\big) = \tilde{O}(\card{X}^{5/7})$.}

Although the algorithms by Ambainis and Belovs can be used to find an $\ell$-collision for $\card{X} > (\ell - 1)\card{Y}$, the complexity increases as the domain size $\card{X}$ increases. These algorithms are inefficient to find collisions of hash functions, since the domain size of cryptographic hash functions could be exponentially larger than the range size.
Thus, we often regard the problem size as 
the range size $\card{Y}$ not the domain size $\card{X}$,
and are  interested in complexity in terms of $\card{Y}$.
For this, we need quantum algorithms dedicated to finding collisions of hash functions.

\subsubsection{Collision finding problem on random functions ($\ell$-collisions in D-Rnd and H-Rnd).}
Among variants of the collision problem, the \emph{collision finding problem on random functions} is the most central problem in the context of cryptography. We introduce algorithms for $\ell =2$ in the following. Assume that $F\sim U(\Func(X,Y))$.

\paragraph{Direct Application of  \algo{BHT}.}
Brassard~\etal~\cite{BHT97} mentioned that, 
if  $\card{X} \geq \ell \card{Y}$,
$\algo{BHT}$ 
finds a collision with quantum query complexity $O(\card{Y}^{1/3})$ with probability at least some constant
($\algo{BHT}$ actually works for an arbitrary function $F\in \Func(X,Y)$ in this large domain case,
if the subset $X'\subset X$ is taken uniformly at random).
Yuen~\cite{Yuen14} showed that if $\card{X} = \card{Y}$,
 then 
$\algo{BHT}$ 
finds a collision with quantum query complexity $O(\card{Y}^{1/3})$ with probability at least some constant.

\paragraph{Zhandry's algorithm.} Zhandry~\cite{Zha15} proposed a quantum algorithm finding a collision with $O(|Y|^{1/3})$-quantum queries as long as $\card{X} = \Omega(|Y|^{1/2})$.
 This relaxes the restrictions imposed on domain sizes in Refs.~\cite{BHT97,Yuen14}.
His algorithm is as follows:
Choose a random subset $X' \subset X$ of size $\card{Y}^{1/2}$
and then invoke Ambainis' algorithm~\cite{Amb07} for $F|_{X'} \colon X' \to Y$ to obtain a collision.
The collision exists if $F$ is random because of the birthday paradox, and the query complexity is $O\big(\card{X'}^{2/3}) = O\big( (\card{Y}^{1/2})^{2/3} \big) = O(\card{Y}^{1/3})$. 

\subsection{Observation on Upper Bounds}
This section provides some observation 
for a function $F\sim U(\Func(X,Y))$:
\begin{enumerate}
\item 
If $\card{X}\ge \card{Y}=N$,
the quantum query complexity for finding an $\ell$-collision of $F$
is trivially upper-bounded by $O(N^{1/2})$.
\item If $\card{X} \geq (3!)^{1/3} |Y|^{(3 - 1)/3}$,
the quantum query complexity for finding a $3$-collision of $F$ is upper-bounded by $O(N^{10/21})$.
\end{enumerate}
Observation~1 is obtained by applying a generalized version of Grover's algorithm, and Observation~2 is obtained by combining the idea of Zhandry~\cite{Zha15} with the result of Belovs~\cite{Belovs12}. The details are as follows.

\subsubsection{Trivial upper-bound for finding $\ell$-collisions.} 
\label{sec:mult-grov}
The basic observation is that a constant fraction of elements of $X$ are members of $\ell$-collisions,
if $\card{X}\ge \card{Y}$.
Based on this, one can easily derive the classical upper bound $O(N)$
on the query complexity for finding an $\ell$-collision:
choose an element $x_1 \in X$ uniformly at random,%
\footnote{It suffices to take $x_1$ deterministically, if the algorithm need not be repeated for amplification.}
and then perform exhaustive search to find $x_i\ (x_i\neq x_1)$ such that $F(x_i) = F(x_1)$
for $i = 2,\dots,\ell$.
One can amplify the success probability to, say, 2/3 by just repeating this procedure constant times.
In the quantum setting, we can replace the exhaustive search with \algo{BBHT}. We call this algorithm $\algo{Multi\mathchar`-Grover}$, described as follows:\\

\noindent\textbf{Algorithm $\algo{Multi\mathchar`-Grover}(F)$}
\begin{enumerate}
\item Choose an element $x_1 \in X$ uniformly at random and set $L = \{x_1\}$. 
\item While $\card{L} < \ell$, do: 
\begin{enumerate}
\item Invoke $\algo{BBHT}(F)$ to find $x \in X$ such that $H(x) =1$, where we implement $H \colon X \to \{0,1\}$ as $H(x) = 1$ if and only if $F(x) = F(x_1)$ and $x\neq x_1$. 
\item 
$L \leftarrow L \cup \{x\}$.
\end{enumerate}
\item Output $(L,F(x_1))$ as an $\ell$-collision. 
\end{enumerate}

Roughly speaking, each step in the loop requires $O(N^{1/2})$ queries to find $x_i$. Thus, the total query complexity is $O(N^{1/2})$ for any constant $\ell$. Therefore, to achieve a meaningful improvement, we need to find an $\ell$-collision with $o(N^{1/2})$ quantum queries. 

We note that the lower bound of $2$-collisions given in~\cite{Zha15} also applies to the case of multicollisions.
Hence, the complexity of optimal multicollision-finding algorithms must be something between the tight lower bound $\Omega(N^{1/3})$ for $2$-collisions and 
the trivial upper bound $O(N^{1/2})$.
This corresponds to something between the birthday bound and the preimage bound in the classical setting.

\subsubsection{Extension of element $\ell$-distinctness to $\ell$-collision.}
We observe that algorithms for the $\ell$-distinctness problem can be used to find $\ell$-collisions of a random function $F \colon X \to Y$ by extending Zhandry's idea. Let $X,Y$ be finite sets with $\card{Y} = N$ and $\card{X} \geq (\ell!)^{1/\ell} N^{(\ell - 1)/\ell}$. 
\begin{enumerate}
\item Choose a random subset $X' \subset X$ of size $(\ell!)^{1/\ell} N^{(\ell - 1)/\ell}$.
\item Invoke Belovs' algorithm~\cite{Belovs12} for $F|_{X'} \colon X' \to Y$ to obtain an $\ell$-collision.
\end{enumerate}
According to the precise analysis by Suzuki, Tonien, Kurosawa, and Toyota~\cite{STKT08}, $F|_{X'}$ has an $\ell$-collision with probability approximately $1/2$. Thus, we observe that Belovs' algorithm can find an $\ell$-collision of $F|_{X'}$ with quantum query complexity $O\big( (N^{1-2^{\ell-2}/(2^\ell-1)})^{(\ell - 1)/\ell} \big)$.\footnote{The approach is not improved by picking a smaller random subset. For example, consider finding a $3$-collision of random function $F$. If we pick a smaller random subset $X'$ of size $N^b$ with $b < 2/3,$, then the probability that $X'$ contains a $3$-collision is roughly $N^{3b}/N^2$. Thus, we need to iterate Belovs' algorithm $N^2/N^{3b}$ times (or use with it the quantum amplitude amplification), where each iteration makes $N^{5b/7}$ queries. Therefore, the total number of queries is $N^{(14-16b)/7} > N^{10/21}$ for $b < 2/3$.} This matches the tight bound $\Theta ( N^{1/3} )$ for $\ell =2$ \cite{Zha15} and gives a new upper bound $O(N^{10/21})$ for $\ell =3$, which is asymptotically lower than the trivial bound $O(N^{1/2})$ (see \autoref{sec:mult-grov}). The white circles for $\ell = 2,3$ in \autoref{fig:Known-Bounds-Rnd} correspond to this algorithm. 
In the case of $\ell \geq 4$, however, 
$(N^{1-2^{\ell-2}/(2^\ell-1)})^{(\ell - 1)/\ell}$ is asymptotically greater than the trivial bound $N^{1/2}$.

\subsection{Previous Works on Multiclaw}
As for quantum algorithms for finding multiclaws, there exist previous works on problems related to ours in the context of quantum computation~\cite{BHT97,DBLP:journals/tcs/Tani09,DBLP:conf/coco/BuhrmanDHHMSW01}, but those works usually focus on the worst-case complexity and regard the domain sizes of functions as the problem size.
To the best of authors' knowledge, there does not exist any previous work that studies $\ell$-claw-finding problem for general $\ell$ for random functions.
\section{Quantum Algorithm for Claw-Finding}
\label{sec:alg}
This section provides our quantum algorithm $\Mclaw$ that finds a multiclaw.
Since multiclaw-finding algorithms can be converted into multicollision-finding algorithms via a trivial reduction (see \autoref{lem:clawtocoll}), $\Mclaw$ can also be used to find multicollisions.

$\Mclaw$ finds an $\ell$-claw with $O(c_N N^{(2^{\ell-1}-1)/(2^\ell-1)})$ quantum queries for random functions $f_i \colon X_i \rightarrow Y$ for $i\in [\ell]$, where $\card{Y}=N$ and there exists a real value $c_N$ 
with $1\le c_N \in o(N^{{1}/({2^\ell -1})})$ such that ${N}/{c_N} \leq |X_i|$ holds for all $i\in [\ell]$.
This implies that, 
for a random function $f \colon X \rightarrow Y$ with $\card{Y}=N$ and $\card{X}\ge l\cdot N$,
an $\ell$-collision can be found with $O(N^{(2^{\ell-1}-1)/(2^\ell-1)})$ quantum queries.
Our bound is optimal for $\ell=2$ and improves upon the bound $O(N^{10/21})$ obtained in \autoref{sec:sok} by combining previous results.
In addition, our bound for the first time improves the simple bound of $O(N^{1/2})$ for all $\ell\ge 4$.

Our algorithm assumes without loss of generality that $\card{X_1},\dots,\card{X_\ell}$ are less than or equal to $\card{Y}$, since
it can also be applied to the functions of interest in the context of cryptography, i.e., the functions of which domains are much larger than their ranges, by restricting the domains of them to subsets. For instance, if $\card{X_i}$ is larger than $\card{Y}$ for some $i$, then one can pick a subset of $X'_i\subset X_i$ of size $\card{Y}$ for all such $i$ and apply the algorithm to 
the functions restricted to $X'_i$.
When we use the algorithm for finding a multicollision via the simple reduction, we can similarly assume $\card{X}\le \ell\cdot \card{Y}$:
If there exists an algorithm that finds an $\ell$-collision for a random function $F \sim U(\Func(X,Y))$ as long as $\card{X} = \ell \cdot \card{Y}$ , then we can use it to find an $\ell$-collision also in the case $\card{X} > \ell\cdot \card{Y}$ with the same number of queries and the same space by choosing a subset $X' \subset X$ of size $\ell\cdot \card{Y}$ and running the algorithm on $F|_{X'}$.

To simply show how the algorithm works,
we begin with 
the case of one-to-one functions.
Then, we describe the actual algorithm for random functions.
We need  elaborate analyses in optimizing parameters 
to deal with
random functions whose
domain sizes may be much smaller than their range size.

\subsection{Algorithm for One-to-One Functions}
The main idea of algorithm $\Mclaw$ is very simple: We just extend the strategy of the $\BHT$ algorithm.
Recall that the $\BHT$ algorithm~\cite{BHT97} first makes a list $L_1$ of many $1$-collisions, and then makes a 2-collision from $L_1$ with $\MTPS$.
Extending this strategy to $\ell$-collision finding, we first make a list $L_1$ of many $1$-collisions.
Then, for $i=2,\dots,\ell-1$ in this order, we make a list $L_i$ of many $i$-collisions from $L_{i-1}$ by iteratively applying $\BBHT$, and finally make an $\ell$-collision from $L_{\ell-1}$ by using $\BBHT$ again.
By optimizing the sizes of the lists, we obtain the query complexity $O(N^{(2^{\ell-1}-1)/(2^\ell-1)})$
if the domain sizes are linear in the range size.

To describe this more concretely together with complexity analysis,
we first explain the idea of the $\BHT$ algorithm adapted to the claw finding problem,
and then show how to develop a quantum algorithm to find $3$-claws from $\BHT$
 and how to extend it further to {the case of finding an $\ell$-claw for any $\ell$}.
For ease of understanding, we assume input functins are all \emph{one-to-one} throughout this subsection.

\subsubsection{Adaptation of the BHT algorithm.}
The BHT algorithm for the collision-finding problem described in \autoref{subsubsec:BHT_for_collision} can be adapted to the claw-finding problem
in a straight-forward manner as follows.

Let $f_1 \colon X_1 \rightarrow Y$ and $f_2 \colon X_2 \rightarrow Y$ be \tnote{one-to-one} functions.
The goal of the \BHT~algorithm is to find a ($2$-)claw for $f_1$ and $f_2$ with \tnote{$O(N^{1/3})$} quantum queries.
For simplicity, we assume that $\card{X_1} = \card{X_2} = \card{Y} = N$ holds.
Let $t_1$ be a parameter that defines the size of a list of $1$-claws for $f_1$. It will be set as $t_1=N^{1/3}$. 

First, collect $t_1$ many $1$-claws for $f_1$ and store them in a list $L_1$.
This first step makes $t_1$ queries.
Second, extend one of $1$-claws in $L_1$ to a $2$-claw for $f_1$ and $f_2$, by using $\BBHT$, and output the obtained $2$-claw.
This second step makes \tnote{$O(\sqrt{N/t_1})$} queries (see \autoref{thm:BBHT}).
Overall, the above algorithm makes $q_2(t_1) = t_1 + \sqrt{N/t_1}$ quantum queries \tnote{up to a constant factor}.
The function $q_2(t_1)$ takes its minimum value $2 \cdot N^{1/3}$ when $t_1 = N^{1/3}$.
By setting $t_1 = N^{1/3}$, the BHT algorithm is obtained.

\subsubsection{From \BHT~to a $3$-claw-finding algorithm.}
Next, we show how the above strategy of the $\BHT$ algorithm can be extended to develop a $3$-claw-finding algorithm.
Let $f_i \colon X_i \rightarrow Y$ be a {one-to-one} function for each $i\in \set{1, 2, 3}$.
Our goal here is to find a $3$-claw for $f_1$, $f_2$, and $f_3$ with $O(N^{3/7})$ quantum queries.
For simplicity, below we assume $|X_1| = |X_2| = |X_3|= |Y|=N$.
Let $t_1,t_2$ be parameters that define the number of $1$-claws for $f_1$ and that of $2$-claws for $f_1$ and $f_2$, respectively,
to collect in the algorithm (they will be fixed later).

First, collect $t_1$ many $1$-claws for $f_1$ and store them in a list $L_1$.
This first step makes $t_1$ queries.
Second, extend $1$-claws in $L_1$ to $t_2$ many $2$-claws for $f_1$ and $f_2$ by using \BBHT, and store them in a list $L_2$.
Here we do not discard the list $L_1$ until we construct the list $L_2$ of size $t_2$.
Since \BBHT~makes \tnote{$O(\sqrt{N/t_1})$} queries to make a $2$-claw from $L_1$, this second step makes $t_2 \cdot O(\sqrt{N/t_1})$ queries if $t_2=o(t_1)$ (see~\autoref{thm:BBHT}).
Finally, extend one of $2$-claws in $L_2$ to a $3$-claw for $f_1$, $f_2$, and $f_3$ by using \BBHT, and output the obtained $3$-claw.
This final step makes \tnote{$O(\sqrt{N/t_2})$} queries.
Overall, the above algorithm makes $q_3(t_1,t_2) = t_1 + t_2 \cdot \sqrt{N/t_1} +  \sqrt{N/t_2}$ quantum queries \tnote{up to a constant factor}.
The function $q_3(t_1,t_2)$ takes its minimum value $3 \cdot N^{3/7}$ when $t_1 = t_2 \cdot \sqrt{N/t_1} =  \sqrt{N/t_2}$, which is equivalent to $t_1 = N^{3/7}$ and $t_2 = N^{1/7}$.
By setting $t_1$ and $t_2$ to these values,
we can obtain a $3$-claw finding algorithm with {$O(N^{3/7})$} quantum queries.

\subsubsection{$\ell$-claw-finding algorithm for general $\ell$.}
By generalizing the above idea for finding a $3$-claw, we can find an $\ell$-claw for general $\ell$ as follows.
Let $f_i \colon X_i \rightarrow Y$ be a {one-to-one} function for each $i\in [\ell]$.
Our goal here is to find an $\ell$-claw for $f_1,\dots,f_\ell$.
For simplicity, we assume that $|X_1| = \cdots = |X_\ell|= |Y|=N$ holds.
Let $t_1,\dots,t_{\ell-1}$ be parameters with $t_i=o(t_{i-1})$ for $i=2,.\dots, l$.

First, collect $t_1$ many $1$-claws for $f_1$ and store them in a list $L_1$.
This first step makes $t_1$ queries.
In the $i$-th step for $i=2,\dots,\ell-1$, extend  $t_i$ many $(i-1)$-claws in \tnote{$L_{i-1}$} to $t_i$ many $i$-claws for $f_1,\dots,f_i$ by using \BBHT, and store them in a list $L_i$.
Here we do not discard the list $L_{i-1}$ until we construct the list $L_i$ of size $t_i$.
Since \BBHT~makes \tnote{$O(\sqrt{N/t_{i-1}})$} queries to make an $i$-claw from $L_{i-1}$, the $i$-th step makes \tnote{$t_i \cdot O(\sqrt{N/t_{i-1}})$} queries.
Finally, extend one of $(\ell-1)$-claws in $L_{\ell-1}$ to an $\ell$-claw for $f_1,\dots,f_\ell$ by using \BBHT, and output the obtained $\ell$-claw.
This final step makes \tnote{$O(\sqrt{N/t_{\ell-1}})$} queries.
Overall, this algorithm makes $q_\ell(t_1,\dots,t_{\ell-1}) = t_1 + t_2 \cdot \sqrt{N/t_1} + \cdots + t_{\ell-1}\cdot\sqrt{N / t_{\ell-2}} + \sqrt{N/t_{\ell-1}}$ quantum queries \tnote{up to a constant factor}.
The function $q_\ell(t_1,\dots,t_{\ell-1})$ takes its minimum value $\ell \cdot N^{(2^{\ell-1}-1)/(2^\ell-1)}$ when $t_1 = t_2 \cdot \sqrt{N/t_1} = \cdots = t_{\ell-1}\cdot\sqrt{N / t_{\ell-2}} = \sqrt{N/t_{\ell-1}}$, which is equivalent to $t_i = N^{(2^{\ell-i}-1)/(2^\ell-1)}$.
By setting $t_i$'s to these values,
we can find an $\ell$-claw with {$O(N^{(2^{\ell-1}-1)/(2^\ell-1)})$} quantum queries.

Our algorithm \Mclaw~is developed based on the above strategy so that 
it can deal with \emph{random} functions.

\subsection{Algorithm for Random Functions}
This subsection only describes how the algorithm works,
and a rigorous complexity analysis of \Mclaw~is provided in the next subsection.
Suppose that $|Y|=N$ for a sufficiently large positive integer $N$,
and that $\card{X_i} \leq \card{Y}$ for all $i$.

Our algorithm is parametrized by a positive integer $k\ge 2$, which controls the success probability and the complexity based on 
Markov's inequality.
We denote by $\Mclaw_k$ the algorithm for the parameter $k$.
$\Mclaw_k$ is applicable if there exists a positive real $c_N$
satisfying  $1\le c_N \in o(N^{{1}/({2^\ell - 1})})$ 
such that $|X_i|$ is at least $|Y| / c_N$ for each $i\in [\ell]$.
We impose an upper limit on the number of queries {that $\Mclaw_k$ is allowed to make}:
We design $\Mclaw_k$ in such a way that it immediately stops and aborts if the number of queries that have been made reaches the limit specified by the parameter $\mathsf{Qlimit}_k := k \cdot 169 \cdot \ell \cdot c_N \cdot  N^{({2^{\ell-1}-1})/({2^{\ell}-1})}$.
The upper limit $\mathsf{Qlimit}_k$ is necessary to prevent the algorithm from running forever, and to make the expected value of the number of queries converge.
We also define the parameters controlling the sizes of the lists:
\begin{equation*}
N_i \colon=
\begin{cases}
\frac{N}{4c_N} & (i=0), \\
N^{\frac{2^{\ell-i}-1}{2^\ell-1}} & (i \geq 1).
\end{cases}
\end{equation*}
For ease of notation, we define $L_0$ and $L_0'$ as $L_0 = L'_0 = Y$.
Then, $\Mclaw_k$ is described as in Algorithm~\ref{alg:Mclaw},
assuming that $\card{X_i}= \card{Y}/c_N $.
If there exist $i$'s such that $\card{X_i}> \card{Y}/c_N $,
we choose an arbitrary subset $X'_i\subset X_i$ of cardinality $\card{Y}/c_N $ for every such $i$
and then apply  Algorithm~\ref{alg:Mclaw} to the functions restricted to $X'_i$.
Since this preprocess requires no query and only negligible time/space
by choosing appropriate $X_i$, say, the first $\card{Y}/c_N$ elements of $X_i$,
we can safely assume $\card{X_i}= \card{Y}/c_N $ in the following analysis.

\begin{algorithm}
\caption{$\Mclaw_k$}
\label{alg:Mclaw}
\begin{algorithmic}
\REQUIRE $\set{f_i\sim U(\Func(X_i,Y))\colon \card{X_i} = \card{Y}/c_N,\  i\in [\ell]}$.
\ENSURE An $\ell$-claw for $f_1,\dots,f_\ell$ or $\perp$.
\renewcommand{\algorithmicensure}{\textbf{Stop condition:}}
\ENSURE If the number of queries reaches $\mathsf{Qlimit}_k$, stop and output $\perp$.
\STATE $L_0 = L'_0 = Y$, $L_1, \dots, L_{\ell} \gets \emptyset$, $L'_1, \dots, L'_{\ell} \gets \emptyset$.
\FOR{$i=1$ to $\ell$}
    \FOR{$j=1$ to $\left\lceil 4c_N \cdot N_i \right\rceil$}
            \STATE Find $x_j \in X_i$ such that $\{(x^{(1)}, \dots, x^{(i-1)}; f_i(x_j))\}\in L'_{i-1}$ 
            for some $(x^{(1)}, \dots, x^{(i-1)})\in X_1\times\dots \times X_{i-1}$
            by running $\MTPS$ on $f_i$ with $L'_{i-1}$, and let $y \colon= f_i(x_j)$.\quad \hfill 
            
        \STATE $L_i \gets L_i \cup \{(x^{(1)}, \dots, x^{(i-1)},x_j; y)\}$, $L'_i \gets L'_i \cup \{y\}$.
        \STATE $L_{i-1} \gets L_{i-1} \setminus \{(x^{(1)}, \dots, x^{(i-1)}; y)\}$, $L'_{i-1} \gets L'_{i-1} \setminus \{y\}$.
    \ENDFOR
\ENDFOR
\STATE Return an element $(x^{(1)},\dots,x^{(\ell)};y) \in L_\ell$ as an output.
\end{algorithmic}
\end{algorithm}

\subsection{Complexity Analysis for Random Functions}

The goal of this section is to show the main theorem,
from which \autoref{thm:claw_informal} follows via the reduction provided in \autoref{lem:clawtocoll}.

\begin{theorem}[\bf main]\label{thm:main}
Let $N$ be a sufficiently large positive integer, and let $c_N$
be any fixed real satisfying $1\le c_N\in o(N^{{1}/({2^\ell-1})})$.
Then,
for $\ell$ functions 
$\set{f_i\sim U(\Func(X_i,Y))\colon i\in [\ell]}$,
where $\card{Y}=N$ and $\card{X_i}\ge N/c_N$,
$\Mclaw_k$ finds an $\ell$-claw with probability at least 
$1-\varepsilon(\ell, N, c_N)-1/k$, where
\begin{equation}
\varepsilon(\ell, N, c_N):= \frac{2\ell}{N}  + \ell \cdot \exp\left(- \frac{1}{25} \cdot \frac{N^{\frac{1}{2^\ell - 1}}}{c_N}\right)
\label{eq:MclawkProbLower}
\end{equation}
and the probability is taken over both the inherent randomness of $\Mclaw_k$
and the randomness of choices of the $\ell$ functions $f_i$.
Moreover, $\Mclaw_k$ makes at most
\begin{equation*}
\mathsf{Qlimit}_k := k \cdot  169 \cdot\ell \cdot c_N \cdot  N^{\frac{2^{\ell-1}-1}{2^{\ell}-1}} 
\end{equation*}
quantum queries,
and runs
in $\tilde{O}(\mathsf{Qlimit}_k)$ time 
on $\tilde{O}(\mathsf{Qlimit}_k)$ qubits
for every possible $\ell$ functions,
where $\tilde{O}(\cdot)$ suppresses a $\log N$ factor.
\end{theorem}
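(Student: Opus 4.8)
The claimed query bound holds for \emph{every} input by construction: the stop condition forces $\Mclaw_k$ to abort as soon as $\mathsf{Qlimit}_k$ queries have been made, so no run ever exceeds this number. The time and space bounds are likewise structural, so the core of the argument is the lower bound on the success probability, which I would factor into two contributions. First I isolate a \emph{good event} $\good$, defined purely over the randomness of the functions $f_1,\dots,f_\ell$, under which every list stays large enough that the algorithm \emph{would} succeed were it allowed to ignore the query limit, and I show $\Pr[\good]\ge 1-\varepsilon(\ell,N,c_N)$. Conditioned on $\good$, I bound the \emph{expected} total number of queries by $\mathsf{Qlimit}_k/k$ and apply Markov's inequality to conclude that the query limit is reached with probability at most $1/k$. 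Combining the two, the overall success probability is at least $\Pr[\good]\cdot(1-1/k)\ge 1-\varepsilon(\ell,N,c_N)-1/k$.

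Two quantities must be controlled at every level $i$ to define $\good$: (i) the number of \emph{distinct} claw-values $\card{L'_i}$, which should stay close to its target $\lceil 4c_N N_i\rceil$, since duplicate images would shrink $L'_i$ and slow the next search; and (ii) the preimage count $\card{f_i^{-1}(L'_{i-1})}$ seen by $\MTPS$, which should stay close to its mean $\card{X_i}\cdot\card{L'_{i-1}}/N\approx 4N_{i-1}$ throughout step $i$ (the $\lceil 4c_N N_i\rceil$ deletions are negligible because $N_i\ll N_{i-1}$). The observation that lets the concentration inequalities apply despite the adaptive, $\MTPS$-dependent choice of the $x_j$ is that $f_i$ is independent of everything that determines $L'_{i-1}$: conditioning on $f_1,\dots,f_{i-1}$ and the run through level $i-1$ fixes $L'_{i-1}$, after which (ii) becomes a statement about the fresh function $f_i$ alone. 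Since ``number of distinct values'' is $1$-Lipschitz in each coordinate, McDiarmid's inequality (\autoref{th:McDiarmid}) controls (i), and the tail bound of \autoref{th:BoundOnHyperGeometricDistribution} controls the preimage count in (ii). A union bound over the $\ell$ levels produces the two $\ell$ factors in $\varepsilon$; the polynomially small slack contributes the $2\ell/N$ term, while the dominant term $\ell\cdot\exp\!\big(-\tfrac{1}{25}N^{1/(2^\ell-1)}/c_N\big)$ comes from the last level, where the lists are smallest, $N_{\ell-1}=N^{1/(2^\ell-1)}$, and the factor $1/c_N$ enters through the concentration parameter $\alpha$, of order $1/(c_N N_{\ell-1})$.

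Conditioned on $\good$, I then bound the expected query count. By \autoref{cor:MTPS}, each call to $\MTPS$ at level $i$ costs at most $9\sqrt{5\card{X_i}/\card{f_i^{-1}(L'_{i-1})}}=O\!\big(\sqrt{N/(c_N N_{i-1})}\big)$ expected queries, using the preimage lower bound from $\good$. Summing the $\lceil 4c_N N_i\rceil$ iterations of level $i$ gives $O\!\big(c_N N_i\sqrt{N/(c_N N_{i-1})}\big)$; the definition $N_i=N^{(2^{\ell-i}-1)/(2^\ell-1)}$ balances these so that level $1$ costs $O(c_N N_1)$ (the choice $N_0=N/(4c_N)$ makes the level-$1$ search trivial, as $L'_0=Y$) while every later level costs only $O\!\big(c_N^{1/2}N^{(2^{\ell-1}-1)/(2^\ell-1)}\big)$; hence the full $c_N$ factor is carried by level $1$ alone. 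Taking the constant $169$ large enough to absorb $9\sqrt{5}$, the factor $4$, and the constant slack in the preimage bound, the total expected number of queries is at most $169\,\ell\,c_N\,N^{(2^{\ell-1}-1)/(2^\ell-1)}=\mathsf{Qlimit}_k/k$. Markov's inequality then bounds the probability of reaching the limit (given $\good$) by $1/k$, and whenever neither the limit is reached nor $\good$ fails, the final list $L_\ell$ is nonempty (its target size $\lceil 4c_N\rceil\ge 4$), so a genuine $\ell$-claw is output.

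The space and time bounds follow by inspection: the stored data is dominated by $L_1$, of size $\tilde{O}(c_N N_1)=\tilde{O}(\mathsf{Qlimit}_k)$, while $\BBHT$ and $\MTPS$ use only $\poly(\log N)$ extra qubits, giving $\tilde{O}(\mathsf{Qlimit}_k)$ qubits; charging a constant per oracle call and a $\log N$ factor for sorting and membership tests in $L'_{i-1}$ gives running time $\tilde{O}(\mathsf{Qlimit}_k)$ for every possible input, as the query ceiling caps all of these. The hardest part is the second paragraph: engineering $\good$ to depend only on the function randomness so that the tail bounds apply cleanly under the adaptive choices, and verifying that the preimage lower bound holds \emph{uniformly} across all iterations of each level, where the tightest (last-level) estimate is exactly what yields the exponential term in $\varepsilon$.
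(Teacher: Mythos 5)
Your overall strategy coincides with the paper's: define a good event $\good$ over the randomness of $f_1,\dots,f_\ell$ guaranteeing at least about $N_{i-1}$ search targets at every level $i$, union-bound to get $\Pr[\lnot\good]\le\varepsilon(\ell,N,c_N)$, bound $\E[Q\mid\good]\le\mathsf{Qlimit}_k/k$ by summing the $\MTPS$ costs level by level (with the full $c_N$ factor carried by level $1$ and only $c_N^{1/2}$ by levels $i\ge 2$, exactly as in the paper's \autoref{cl:E[Q|good]}), and finish with Markov against the hard query cap; your decomposition $\Pr[\mathrm{succ}]\ge\Pr[\good](1-1/k)$ is an equivalent rearrangement of the paper's bound $\E[Q]\le(1/k+\Pr[\lnot\good])\cdot\mathsf{Qlimit}_k$. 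The query/time accounting and the observation that not reaching the cap forces a genuine $\ell$-claw to be output are also as in the paper.

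The one step that would not go through as written is the concentration argument. Your item (i) is both vacuous and unprovable by your stated tool: the algorithm deletes each found value $y$ from $L'_{i-1}$ before the next search, so $L'_i$ contains exactly $\lceil 4c_NN_i\rceil$ distinct values by construction, and McDiarmid cannot in any case be applied to $\{f_i(x_j)\}_j$ since the $x_j$ are chosen adaptively as functions of $f_i$. Your item (ii) invokes the hypergeometric tail bound for $\card{f_i^{-1}(L'_{i-1})}$, but that count is a sum of independent Bernoullis (binomial), not a hypergeometric variable. The paper's chain is: McDiarmid applied to $\card{\Img(f_i)}$ (the quantity your (i) should have targeted; see \autoref{lem:BallBin}) gives $\card{\Img(f_i)}\ge\lceil N/3c_N\rceil$ except with probability $2/N$; conditioned on its size, the image is a uniformly random subset of $Y$, so $\card{\Img(f_i)\cap L'_{i-1}}$ is genuinely hypergeometric with mean at least $\tfrac{4}{3}N_{i-1}$ and \autoref{th:BoundOnHyperGeometricDistribution} yields the $\exp(-N_{i-1}/25c_N)$ tail; finally $\card{f_i^{-1}(L'_{i-1})}\ge\card{\Img(f_i)\cap L'_{i-1}}$ feeds the $\MTPS$ bound. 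You correctly identified the independence of $f_i$ from $L'_{i-1}$ and where the $1/c_N$ in the exponent comes from, so the repair is local, but without the image-size step the hypergeometric model has no justification. A second, smaller omission concerns space: the coherent membership test at level $1$ cannot store $L'_0=Y$ (size $N$); the paper replaces the test $f_1(x)\in L'_0$ by $f_1(x)\notin Y\setminus L'_0$, whose complement never exceeds $\lceil 4c_NN_1\rceil$ elements, and this is what yields the claimed $\tilde{O}(c_NN_1)$ qubit bound rather than $\tilde{O}(N)$.
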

This theorem shows that, for each integer $k \geq 2$, $\Mclaw_k$ finds an $\ell$-claw with a constant probability by making $O\big(c_N\cdot N^{({2^{\ell-1}-1})/({2^{\ell}-1}})\big)$ queries.

Before proving the theorem, we first show the following two lemma for later use.
\begin{lemma}\label{lem:BallBin}
Let $X,Y$ be non-empty finite sets such that $\card{X} \leq \card{Y}$.
Suppose that a function $f \colon X \rightarrow Y$ is chosen uniformly at random
from $\Func(X,Y)$, that is, $f\sim U(\Func(X,Y))$.
Then, it holds that
\begin{equation*}
\Pr_{f \sim U(\Func(X,Y))} \left[ \card{{\rm Im}(f)} \geq  \frac{|X|}{2} - \sqrt{|X|\ln |Y|/2} \right] \geq 1 - \frac{2}{|Y|}.
\end{equation*}
\end{lemma}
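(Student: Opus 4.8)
The plan is to recognize that $\card{\Img(f)}$ is a concentrated function of independent samples and to apply McDiarmid's inequality (\autoref{th:McDiarmid}). Write $M := \card{X}$ and $N := \card{Y}$. Sampling $f \sim U(\Func(X,Y))$ is the same as drawing $M$ independent uniform values $y_i := f(x_i) \sim U(Y)$, one per input $x_i \in X$. Define $\Phi(y_1,\dots,y_M) := \card{\set{y_1,\dots,y_M}}$, so that $\Phi = \card{\Img(f)}$. The structural fact I would use is that $\Phi$ is $1$-Lipschitz: changing a single coordinate $y_i$ alters the number of distinct values by at most one. Hence \autoref{th:McDiarmid} applies with the $M$ independent variables $y_1,\dots,y_M$.

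First I would control the lower tail. The displayed form of \autoref{th:McDiarmid} bounds the upper tail, but the matching lower-tail estimate $\Pr[\Phi \le \mu - \lambda] \le 2e^{-2\lambda^2/M}$, with $\mu := \E[\Phi]$, follows from the same bounded-differences property (equivalently, by applying the inequality to $-\Phi$, whose mean is $-\mu$). Choosing $\lambda := \sqrt{M \ln N / 2}$ makes $2\lambda^2/M = \ln N$, so the bound becomes $2e^{-\ln N} = 2/N$, yielding
\[
\Pr_{f \sim U(\Func(X,Y))}\!\left[ \card{\Img(f)} \ge \mu - \sqrt{M \ln N / 2} \right] \ge 1 - \frac{2}{N}.
\]

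It then remains to establish $\mu \ge M/2$, which lets me relax the threshold $\mu - \lambda$ to the weaker $M/2 - \lambda$ claimed in the lemma. By linearity of expectation, each $y \in Y$ belongs to $\Img(f)$ with probability $1 - (1 - 1/N)^M$, so $\mu = N\big(1 - (1-1/N)^M\big)$. Using $(1-1/N)^M \le e^{-M/N}$ gives $\mu \ge N\big(1 - e^{-M/N}\big)$, and since $M \le N$ the ratio $u := M/N$ lies in $(0,1]$; the elementary inequality $1 - e^{-u} \ge u/2$ on $[0,1]$ then gives $\mu \ge Nu/2 = M/2$. I would verify this last inequality by noting that $g(u) := 1 - e^{-u} - u/2$ satisfies $g(0)=0$, is unimodal on $[0,1]$ (its derivative $e^{-u}-1/2$ vanishes only at $u=\ln 2$), and has $g(1) = 1 - e^{-1} - 1/2 > 0$, so $g \ge 0$ throughout. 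Combining $\mu \ge M/2$ with the tail bound, via the inclusion $\set{\Phi \ge \mu - \lambda} \subseteq \set{\Phi \ge M/2 - \lambda}$, finishes the proof.

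The hard part will not be the concentration step, which is a routine application of McDiarmid once the Lipschitz property is observed, but rather the expectation estimate: one must pass from the exact value $N\big(1-(1-1/N)^M\big)$ to the clean lower bound $M/2$ uniformly over all $M \le N$, and this is exactly where the constraint $\card{X} \le \card{Y}$ enters, forcing $u = M/N \in (0,1]$ so that the convexity-type inequality $1 - e^{-u} \ge u/2$ is available.
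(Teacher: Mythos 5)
Your proposal is correct and follows essentially the same route as the paper: McDiarmid's inequality applied to a $1$-Lipschitz statistic of the $\card{X}$ independent values $f(x)$, combined with the expectation estimate $\E[\card{\Img(f)}]\ge \card{X}/2$ (which, as you note, is where $\card{X}\le\card{Y}$ is used). The only cosmetic difference is that the paper works with the complement $\Phi=\card{Y\setminus\Img(f)}$ and invokes the stated upper-tail bound directly, whereas you work with $\card{\Img(f)}$ and derive the matching lower tail; these are equivalent, and your handling of both the tail and the inequality $1-e^{-u}\ge u/2$ on $[0,1]$ is sound.
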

\begin{proof}
Note that, for each $x \in X$, $f(x)$ is the random variable that takes value in $Y$.
Moreover, $\{f(x)\}_{x \in X}$ is the set of independent random variables.
Let us define a function $\Phi \colon Y^{\times \card{X}} \rightarrow \Natural$ by $\Phi \left(y_1,\dots,y_{\card{X}}\right) = \left| Y \setminus \{y_i\}_{1 \leq i \leq \card{X}}\right|$.
Then $\Phi$ is $1$-Lipschitz, i.e., 
\begin{equation*}
\left| \Phi(y_1,\dots,y_{i-1},y_i,y_{i+1},\dots,y_{\card{X}}) - \Phi(y_1,\dots,y_{i-1},y'_i,y_{i+1},\dots,y_{\card{X}}) \right| \leq 1
\end{equation*}
holds for arbitrary choices of elements $y_1,\dots,y_{\card{X}}$, and $y'_i$ in $Y$.
Now we apply \autoref{th:McDiarmid} to $\Phi$
with
$M = \card{X}$, $\lambda = \sqrt{|X|\ln|Y| / 2}$, and $y_x := f(x)$ for each $x \in X$ (here we identify $X$ with the set $\{1,\dots,\card{X}\}$).
Then, since it holds that $\E \left[ \Phi(y_1,\dots,y_M) \right] = |Y|\left( 1- 1/\card{Y}\right)^{\card{X}}$, we have that
\begin{align*}
\Pr_{f \sim U(\Func(X,Y))}\left[ \Phi(y_1,\dots,y_M) \geq \card{Y}\left( 1- 1/\card{Y}\right)^{\card{X}} + \sqrt{|X|\ln|Y| / 2} \right] \leq \frac{2}{\card{Y}}.
\end{align*}
In addition, it follows that
\begin{align*}
\card{Y}\left( 1- 1/\card{Y}\right)^{\card{X}} &\leq \card{Y}e^{-\card{X}/\card{Y}} \leq \card{Y} \left(1 - \frac{|X|}{|Y|} + \frac{1}{2}\left(\frac{|X|}{|Y|}\right)^2 \right) \nonumber \\
&= |Y| - |X| \left( 1 - \frac{1}{2}\frac{|X|}{|Y|}\right) \leq |Y| - \frac{\card{X}}{2},
\end{align*}
where we used the assumption that $\card{X} \leq \card{Y}$ for the last inequality.
Since it holds that $\Phi(y_1,\dots,y_M) = \left|Y \setminus {\rm Im}(f) \right|$ and $\left|{\rm Im}(f)\right| = |Y| -  \left|Y \setminus {\rm Im}(f) \right|$, it follows that 
$\left|{\rm Im}(f)\right|$ 
is at least
\begin{align*}
 |Y|  - \left(\card{Y} - \frac{\card{X}}{2} + \sqrt{\card{X} \ln \card{Y} / 2} \right) = \frac{\card{X}}{2} -  \sqrt{\card{X} \ln \card{Y} / 2}
\end{align*}
with a probability at least $1 - 2/{\card{Y}}$, which completes the proof.
\end{proof}
\begin{lemma}\label{lm:iteration cost}
Let $S=(\upp{S}{1},\dots, \upp{S}{k})\in (\set{0,1}^{n})^k$ be a sorted list of $k$ $n$-bit strings.
For the list $S$
stored in a quantum register $\reg{S}$
and an oracle $\calO_f$ 
for $f\colon \set{0,1}^m\to \set{0,1}^n$
such that $\calO_f \ket{x}\ket{b}=\ket{x}\ket{b\oplus f(x)}$ for every $x\in \set{0,1}^m$ and $b\in \set{0,1}^n$,  a single Grover iteration $G_f=-WS_0WS_f$ (following the notation in \cite{boyer1998tight}) of Grover's search algorithm that finds $x$ such that $f(x)\in S$
can be performed in $O(m+n\log k)$ time with $m$ qubits for indexing $x$ and $O(m+kn)$ ancilla qubits (including those for $\reg{S}$),
where each of the ancilla qubits is assumed to be in the initial state $\ket{0}$ and is returned to the original state $\ket{0}$ after the iteration.
\end{lemma}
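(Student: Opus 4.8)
The plan is to expand the Grover iterate $G_f=-WS_0WS_f$ into its four constituent operations and bound the time and space of each separately. The transform $W$ is a Walsh--Hadamard transform on the $m$ index qubits, costing $O(m)$ elementary gates (the pair $WS_0W$ together forming the diffusion about the uniform superposition); the reflection $S_0$ about $\ket{0}$ is a single $m$-fold controlled phase gate, again $O(m)$ time on the index qubits with $O(1)$ scratch; and the global factor $-1$ is free. Hence the entire cost, both in time and in the number of ancillae beyond the $m$ index qubits, is dominated by the marking reflection $S_f$, which flips the phase of exactly those $\ket{x}$ with $f(x)\in S$. I would therefore reduce the lemma to an efficient, reversible implementation of $S_f$ that leaves every scratch register in $\ket{0}$.

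For $S_f$ I would proceed in the standard compute--phase--uncompute pattern. First apply $\calO_f$ once to write $f(x)$ into a fresh $n$-qubit scratch register $\reg{Y}$. Next test membership $f(x)\in S$ by a reversible binary search over the sorted register $\reg{S}$: maintain $O(\log k)$-bit interval endpoints, and at each of the $\lceil \log_2 k\rceil$ steps fetch the current midpoint entry of $\reg{S}$, compare it against the contents of $\reg{Y}$ (an $n$-bit comparison, $O(n)$ gates), and update the endpoints, a final equality test setting a one-qubit flag $\reg{b}$ to $1$ iff $f(x)$ was located in $S$. Then apply a $Z$ gate to $\reg{b}$ to realize the phase flip, and finally run the whole comparison circuit and the $\calO_f$ call in reverse (a second application of $\calO_f$) so that $\reg{Y}$, $\reg{b}$, the endpoint registers, and all lookup scratch return to $\ket{0}$ and only the phase on the index register survives. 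Since the search performs $O(\log k)$ comparisons of $n$-bit strings, this costs $O(n\log k)$ time, which together with the $O(m)$ for $W$ and $S_0$ and the constant number of oracle calls yields the claimed $O(m+n\log k)$.

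For the space, the index register uses the allotted $m$ qubits; $\reg{S}$ occupies $kn$ qubits, $\reg{Y}$ uses $n$, the flag and endpoint registers use $O(\log k)$, and the per-step lookup uses $O(n)$ scratch, so the ancillae total $O(m+kn)$, all returned to $\ket{0}$. The delicate point --- and the part I would write most carefully --- is the reversible midpoint lookup: because $f(x)$, and hence the search path, is in superposition, fetching $S[\mathrm{mid}]$ is a controlled read of $\reg{S}$ at a superposed address, which must be realized by a multiplexed copy driven by the endpoint registers and then uncomputed. I would argue that, exploiting that the stored list is a fixed sorted string, each level's lookup-and-compare is a controlled operation on $\reg{S}$ of depth $O(n)$, that the $O(\log k)$ levels compose to the stated time bound, and that reversing the full construction restores every scratch qubit to $\ket{0}$; establishing this clean uncomputation while keeping the address-dependent reads within budget is the main obstacle.
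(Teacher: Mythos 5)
Your proposal is correct and follows essentially the same route as the paper's proof: decompose $G_f$ into $W$, $S_0$, and $S_f$, charge $O(m)$ to the diffusion part, and realize $S_f$ by a compute--phase--uncompute pattern with two oracle calls and a coherent binary search over the sorted register $\reg{S}$ costing $O(n\log k)$ time and $O(kn)$ qubits. If anything, your treatment is more detailed than the paper's, which simply asserts that membership can be checked "by performing binary search coherently on the sorted list," whereas you explicitly identify the superposed-address lookup and its uncomputation as the point requiring care.
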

\begin{proof}
Grover's search algorithm starts with the state $\sum_{x=0}^{2^m-1}\frac{1}{\sqrt{2^m}}\ket{x}\ket{0^n}$ on an $(m+n)$-qubit register. 
Let us call the first $m$ qubits the \emph{index register}.
The algorithm requires additional qubits to perform $G_f$.
In the following, we focus on this additional qubits when considering the number of required qubits.

The operator $W$ can be performed in $O(m)$ time, since $W$ applies a one-qubit
Hadamard gate to each qubit in the index register. The operator $S_0$ adds the phase $-1$ to the all-zero state in the index register, which can be performed in $O(m)$ time
with $O(m)$ ancilla qubits.  
To perform $S_f$, 
we first make a query to $\calO_f$ to tranform $\ket{x}\ket{0^n}\mapsto \ket{x}\ket{f(x)}$,
and then perform the following transformation 
\[
 \ket{x}\ket{f(x)}\ket{S}\stackrel{(1)}{\longmapsto} (-1)^{[f(x)\in S]}\ket{x}\ket{f(x)}\ket{S}
\stackrel{(2)}{\longmapsto} (-1)^{[f(x)\in S]}\ket{x}\ket{0^n}\ket{S}.
\]
The transformation $(1)$ takes $O(n\log k)$ time with $O(kn)$ qubits,
since 
whether $f(x)$ in $S$
can be checked by
performing binary search coherently
on the sorted list $S$.
The transformation $(2)$ is done by making a query again.
Therefore, the total time is $O(m+n\log k)$, and the total number of qubits is $O(m+ kn)$.
\end{proof}

Now, we prove~\autoref{thm:main}.

\begin{proofof}{\autoref{thm:main}}
We show first that~\autoref{eq:MclawkProbLower} holds.
Let us define $\good^{(i)}$ to be the event that
\begin{equation*}
\card{\Img(f_i) \cap L_{i-1}'} \geq N_{i-1}
\end{equation*}
holds just before $\Mclaw_k$ starts to construct $i$-claws.
We prove the following claim later.
\begin{claim}
\label{cl:PrGood(i)}
For sufficiently large $N$, it holds that
$\Pr\big[\good^{(i)}\big] \geq 1 - \frac{2}{N}  - \exp\big(- \frac{1}{25} \cdot \frac{N_{i-1}}{c_N}\big).$
\end{claim}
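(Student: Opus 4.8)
The plan is to prove the bound separately for the base case $i=1$ and for the range $i\ge 2$, since the former is immediate while the latter carries the real content. For $i=1$ we have $L_0'=Y$, so $\good^{(1)}$ is exactly the event $\card{\Img(f_1)}\ge N_0=N/(4c_N)$. Applying \autoref{lem:BallBin} to $f_1$ (with $\card{X_1}=N/c_N\le\card{Y}$) gives $\card{\Img(f_1)}\ge \frac{N}{2c_N}-\sqrt{\frac{N\ln N}{2c_N}}$ with probability at least $1-2/N$, and for large $N$ this lower bound exceeds $N/(4c_N)$ because $c_N\ln N=o(N)$ under the hypothesis $c_N\in o(N^{1/(2^\ell-1)})$. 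Hence $\Pr[\good^{(1)}]\ge 1-2/N$, which is stronger than the claimed bound.

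For $i\ge 2$, the key observation is that $L_{i-1}'$ is determined by $f_1,\dots,f_{i-1}$ and the internal randomness of $\Mclaw_k$, all of which are independent of $f_i$. I would therefore fix any realization of these that yields a set $L_{i-1}'\subseteq Y$ of the nominal size $m:=\lceil 4c_N N_{i-1}\rceil$ and bound the conditional probability over $f_i$ alone; the unconditional bound follows. Writing $K:=\card{\Img(f_i)\cap L_{i-1}'}$, I would first invoke \autoref{lem:BallBin} for $f_i$ to get $\card{\Img(f_i)}\ge \frac{N}{2c_N}-\sqrt{\frac{N\ln N}{2c_N}}\ge \frac{N}{3c_N}=:s_0$ with probability at least $1-2/N$, again using $c_N\ln N=o(N)$.

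The central step is to control $K$ conditioned on the image size. Since $f_i$ is uniform, its distribution is invariant under relabelling of $Y$, so conditioned on $\card{\Img(f_i)}=s$ the image is uniform among the size-$s$ subsets of $Y$; consequently $K$ is exactly the hypergeometric variable counting how many of the $m$ ``defective'' elements of $L_{i-1}'$ fall into a uniformly random $s$-subset of the $N$-element set $Y$. For every $s\ge s_0$ one has $\E[K]=sm/N\ge \frac{4}{3}N_{i-1}$, so with $\lambda:=\E[K]-N_{i-1}\ge \tfrac13 N_{i-1}\ge 2$ (for large $N$, since $N_{i-1}\ge N^{1/(2^\ell-1)}$) I would apply \autoref{th:BoundOnHyperGeometricDistribution} with $n=N$, $n_1=s$, $m'=m$. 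The decisive choice is to bound $\alpha$ from below by its \emph{defective-count} term, $\alpha\ge \frac{1}{m+1}\ge \frac{1}{5c_N N_{i-1}}$, rather than its sample-size term: this yields $2\alpha(\lambda^2-1)\ge \frac{2}{5c_N N_{i-1}}\cdot\frac{N_{i-1}^2}{10}=\frac{N_{i-1}}{25c_N}$, hence $\Pr[K<N_{i-1}\mid \card{\Img(f_i)}=s]<\exp(-N_{i-1}/(25c_N))$ for all $s\ge s_0$. Combining with the image-size bound via the law of total probability gives $\Pr[\neg\good^{(i)}]\le \frac{2}{N}+\exp(-N_{i-1}/(25c_N))$, as claimed.

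I expect the main obstacle to be twofold. First, the symmetry reduction that turns $K$ into a genuine hypergeometric variable must be stated carefully, together with the independence of $f_i$ from $L_{i-1}'$, so that conditioning on the history does not disturb the randomness of $f_i$. Second, and more importantly, obtaining the exponent $N_{i-1}/(25c_N)$ rather than the much weaker $c_N N_{i-1}^2/N$ hinges entirely on selecting the $\frac{1}{m+1}$ term of $\alpha$; the ``obvious'' sample-size term $\frac{1}{s+1}\approx \frac{c_N}{N}$ is too small in the relevant regime $N_{i-1}\ll N/c_N^2$. Verifying that the various ``for large $N$'' slacks (namely $s_0\ge N/(3c_N)$, $\lambda^2-1\ge N_{i-1}^2/10$, and $m+1\le 5c_N N_{i-1}$) hold simultaneously is routine given $c_N\in o(N^{1/(2^\ell-1)})$.
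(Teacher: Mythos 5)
Your proof is correct and follows essentially the same route as the paper's: \autoref{lem:BallBin} to guarantee $\card{\Img(f_i)}\ge N/(3c_N)$ with probability $1-2/N$, then the hypergeometric tail bound of \autoref{th:BoundOnHyperGeometricDistribution} with $\alpha$ lower-bounded by the defective-count term $\tfrac{1}{m+1}$, exactly as in the paper. The only (cosmetic, and slightly cleaner) difference is that you apply the tail bound uniformly for every image size $s\ge N/(3c_N)$, whereas the paper conditions on the boundary case $\card{\Img(f_i)}=\lceil N/(3c_N)\rceil$ and invokes an unproved stochastic-monotonicity step to extend to $\pregood^{(i)}$.
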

\noindent Let $\good$ denote the event $\good^{(1)} \land \cdots \land \good^{(\ell)}$.
Notice that $\Pr\left[\neg\good\right]$ is upper-bounded by $\sum_{i=1}^\ell \Pr[\neg \good^{(i)}]$.
Thus, it follows from
\autoref{cl:PrGood(i)}
that
\begin{align}
\Pr\left[\neg\good\right]\le \sum_{i=1}^\ell \left( \frac{2}{N}  + \exp\left(- \frac{1}{25} \cdot \frac{N_{i-1}}{c_N}\right) \right)
\leq \ell \left[\frac{2}{N}  + \cdot \exp\left(- \frac{1}{25} \cdot \frac{N^{\frac{1}{2^\ell - 1}}}{c_N}\right)\right]:=
\varepsilon(\ell,N,c_N). \label{eq:GoodProbBound}
\end{align}

We prove the following claim later, which shows
that the average query complexity over the inherent randomness of the algorithm
is at most $ \frac{1}{k}\mathsf{Qlimit}_k$ when the event $\good$ occurs.

\begin{claim}
\label{cl:E[Q|good]}
For sufficiently large $N$, it holds that
$\E\left[ Q \mymiddle \good \right] \leq \frac{1}{k}\mathsf{Qlimit}_k$,
where $Q$ is the total number of queries made by $\Mclaw_k$.
\end{claim}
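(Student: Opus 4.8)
The plan is to write the total query count as $Q=\sum_{i=1}^{\ell}Q_i$ with $Q_i=\sum_{j=1}^{\ceil{4c_N N_i}}q_{i,j}$, where $q_{i,j}$ is the number of queries made by the $j$-th invocation of $\MTPS$ in the $i$-th outer loop (the stop condition only decreases $Q$, so bounding the un-truncated cost suffices). The $j$-th call searches $f_i$ for a preimage of the current target set $L'_{i-1}$, so by \autoref{cor:MTPS} its expected cost is at most $9\sqrt{5\card{X_i}/\card{f_i^{-1}(L'_{i-1})}}$. The crucial link to the event $\good^{(i)}$ is the bound $\card{f_i^{-1}(L'_{i-1})}\ge\card{\Img(f_i)\cap L'_{i-1}}$, since the right-hand side is exactly what $\good^{(i)}$ controls: it guarantees $\card{\Img(f_i)\cap L'_{i-1}}\ge N_{i-1}$ at the start of the loop, and because each of the first $j-1$ inner iterations deletes from $L'_{i-1}$ exactly one value lying in $\Img(f_i)$, this intersection is still at least $N_{i-1}-(j-1)$ when the $j$-th call begins. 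Hence, conditioned on the functions and the execution history $\mathcal{G}_{i,j}$ up to that point, on $\good^{(i)}$ we get $\E[q_{i,j}\mid\mathcal{G}_{i,j}]\le 9\sqrt{5\card{X_i}/(N_{i-1}-(j-1))}=:b_{i,j}$ (the target set is nonempty on $\good^{(i)}$, so $\MTPS$ terminates).

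To pass from this per-step bound to $\E[Q\mid\good]$, I would use $\good=\bigwedge_i\good^{(i)}\subseteq\good^{(i)}$, so $\E[Q_i\mathbf{1}_{\good}]\le\E[Q_i\mathbf{1}_{\good^{(i)}}]$. Since $\good^{(i)}$ is $\mathcal{G}_{i,j}$-measurable, the tower property turns the pointwise bound into $\E[q_{i,j}\mathbf{1}_{\good^{(i)}}]\le b_{i,j}\Pr[\good^{(i)}]\le b_{i,j}$; summing over $j$ and $i$ yields $\E[Q\mathbf{1}_{\good}]\le\sum_i B_i$ with $B_i:=\sum_{j=1}^{\ceil{4c_N N_i}}b_{i,j}$. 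Dividing by $\Pr[\good]\ge 1-\varepsilon(\ell,N,c_N)$, which tends to $1$ by \autoref{eq:GoodProbBound}, then controls $\E[Q\mid\good]=\E[Q\mathbf{1}_{\good}]/\Pr[\good]$.

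The remaining work is the arithmetic on $\sum_i B_i$. The key quantitative fact is $\ceil{4c_N N_i}=o(N_{i-1})$ for every $i$: for $i\ge2$ this follows from $N_{i-1}/N_i=N^{2^{\ell-i}/(2^\ell-1)}$ outgrowing $c_N$ (this is exactly where $c_N\in o(N^{1/(2^\ell-1)})$ enters), and for $i=1$ from $c_N^2 N_1=o(N)$. Consequently $N_{i-1}-(j-1)\ge N_{i-1}(1-o(1))$ throughout the loop, so $B_i\le 9\sqrt5\,(1+o(1))\,\ceil{4c_N N_i}\sqrt{N/(c_N N_{i-1})}=36\sqrt5\,(1+o(1))\sqrt{c_N}\,N_i\sqrt{N/N_{i-1}}$, using $\card{X_i}=N/c_N$. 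Plugging in the balancing identity $N_i\sqrt{N/N_{i-1}}=N^{(2^{\ell-1}-1)/(2^\ell-1)}$ for $i\ge2$, and $N_1\sqrt{N/N_0}=2\sqrt{c_N}\,N^{(2^{\ell-1}-1)/(2^\ell-1)}$ for $i=1$ (since $N_0=N/(4c_N)$), gives each $B_i=O\big(c_N N^{(2^{\ell-1}-1)/(2^\ell-1)}\big)$ with an explicit constant; the worst single contribution is the $i=1$ loop at $72\sqrt5<169$, and $\sum_i B_i\le(72\sqrt5+36\sqrt5(\ell-1))\,c_N N^{(2^{\ell-1}-1)/(2^\ell-1)}(1+o(1))$, which is below $169\,\ell\,c_N N^{(2^{\ell-1}-1)/(2^\ell-1)}$ for $\ell\ge1$. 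Absorbing the $(1+o(1))$ and $1/\Pr[\good]$ factors for large $N$ then gives $\E[Q\mid\good]\le\tfrac1k\mathsf{Qlimit}_k$.

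I expect the main obstacle to be precisely the interplay highlighted in the third paragraph: the $\MTPS$-relevant preimage set $\card{f_i^{-1}(L'_{i-1})}$ can only be lower-bounded through the image intersection from $\good^{(i)}$, and that intersection erodes as the loop consumes $L'_{i-1}$, so the argument closes only because the list sizes are tuned so that at most $o(N_{i-1})$ deletions occur before the loop ends — which is what keeps every summand of $B_i$ at essentially its leading value and makes the explicit constant fit under $169$. The secondary delicate point is conditioning on the global event $\good$ rather than the local $\good^{(i)}$, handled cleanly by the inclusion $\good\subseteq\good^{(i)}$ together with $\Pr[\good]\to1$; getting the numerical constants to cooperate is then routine once the $o(1)$ slack is in place.
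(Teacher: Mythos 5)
Your proposal is correct and follows essentially the same route as the paper's proof: decompose $Q$ by loop index, apply \autoref{cor:MTPS} with $\card{f_i^{-1}(L'_{i-1})}\ge\card{\Img(f_i)\cap L'_{i-1}}\ge N_{i-1}-j+1$ under $\good^{(i)}$, use $\lceil 4c_NN_i\rceil=o(N_{i-1})$ to keep each summand near its leading value, and sum to get the $169$ and $85$-type constants with the $i=1$ term dominating. The only difference is that you handle the conditioning on the global event $\good$ more carefully (via indicators, the tower property, and dividing by $\Pr[\good]\to 1$) than the paper, which simply writes $\E[Q\mid\good]=\sum_i\E[Q^{(i)}\mid\good^{(i)}]$; this is a small technical refinement, not a different argument.
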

\noindent It follows from \autoref{cl:E[Q|good]} that $\E[Q]$ is upper-bounded by
\begin{align*}
\E[Q \mid  \good] + \E[Q \mid \lnot \good] \Pr[\lnot \good] 
&\leq \left( \frac{1}{k} + \Pr[\lnot \good]  \right) \cdot \mathsf{Qlimit}_k \label{eq:ExpectQupper}.
\end{align*}
From Markov's inequality, the probability that $Q$ reaches $\mathsf{Qlimit}_k$ 
is at most
\begin{equation}
\Pr \left[ Q \geq \mathsf{Qlimit}_k \right] \leq \frac{\E[Q]}{\mathsf{Qlimit}_k} \leq \frac{1}{k} + \Pr[\lnot \good]\leq  \frac{1}{k} +   \varepsilon(\ell, N, c_N), \label{eq:ProbReachLimit}
\end{equation}
where the last inequality follows from~\autoref{eq:GoodProbBound}.
The event ``$Q$ does not reach $\mathsf{Qlimit}_k$'' implies that $\Mclaw_k$ finds an $\ell$-claw.
Thus, 
from~\autoref{eq:ProbReachLimit}, 
$\Mclaw_k$ finds an $\ell$-claw with probability at least $1 - \varepsilon(\ell, N, c_N)-{1}/{k}$
by making at most $\mathsf{Qlimit}_k$ queries
for every input.

Next, we move on the time and space complexity.
To make $\Mclaw_k$ time-efficient,
we keep $L_i$ and $L'_i$ sorted with respect to $y$ values
by using an appropriate data structure such as balanced trees.
Since it takes only a polylogarithmic time in $N$ for each insertion or deletion,
the time complexity for updating $L_i$ and $L'_i$ is negligible.
We also assume that $L_i$ and $L'_i$ are stored in quantum registers
until  $L_{i+1}$ and $L'_{i+1}$ are constructed (and then they are discarded).
Thus, the total number of qubits for storing $L_{i}$ and $L'_{i}$ for $i\in [n]$
is the maximum of $\card{L_{i}}$ over $i$ up to a logarithmic factor.

The dominant part is thus the multi-target quantum search,
which consists of Grover iterations.
Hence, we estimate the time and space (qubits) required by each Grover iteration 
with  Lemma~\ref{lm:iteration cost}.
For this, set the parameters $(m,n,k,S)$ in the lemma to $(\ceil{\log \card{X_i}}, \ceil{\log \card{Y}}, \card{L_{i-1}}, L_{i-1})$.
Then, each Grover iteration in the $j$-th search during the construction of  $i$-claws
takes $O(\log \card{X_i}+ \log\card{Y}\cdot \log\card{L_{i-1}})=\tilde{O}(1)$ time,
where we use $\card{X_i}\le \card{Y}$.
Thus, the total time is upper-bounded by the number of Grover iterations,
which is further upper-bounded by the number of queries, 
where we ignore logarithmic factors. 

For the space complexity, 
each Grover iteration in the $j$-th search during the construction of  $i$-claws
uses
$O(\log |X_i|+(c_NN_{i-1}-j+1) \log |Y|)=\tilde{O}(c_NN_{i-1}-j+1)$ qubits.
Since all qubits used in a single iteration can be reused in the next iteration,
the number of  qubits required in constructing $i$-claws is
$\tilde{O}(\max_{j}(c_NN_{i-1}-j+1))=\tilde{O}(c_NN_{i-1})$.
Thus, the number of qubits required in $\Mclaw_k$
is $\tilde{O}(\max_{i\in [\ell]}c_N\cdot N_{i-1})=\tilde{O}(c_N\cdot N_0)=\tilde{O}(N)$. 
However, this is too large.  The reason is that $k=c_NN_{i-1}-j+1$ is large when $i=1$.

We can substantially reduce the number of the qubits required for collecting $1$-claws
by using the simple fact that searching for $x$ with $f_1(x)\in L'_{0}$ is 
equivalent to searching for $x$ with $f_1(x)\not\in Y\setminus L'_{0}$.
More concretely,  instead of flipping the phase of $\ket{x}$ such that $f_1(x)\in L'_0$,
we flip the phase $\ket{x}$ such that $f_1(x)\in Y\setminus L'_0$ and then flip the phase of all basis states:
\[
\ket{x}\ket{f(x)}\ket{Y\setminus L'_{0}}\mapsto  (-1) (-1)^{[f(x)\in Y\setminus L'_{0}]}\ket{x}\ket{f(x)}\ket{Y\setminus L'_{0}}=(-1)^{[f_1(x)\not\in Y\setminus L'_{0}]}\ket{x}\ket{f(x)}\ket{Y\setminus L'_{0}}.
\]
The point is that $|Y\setminus L'_{0}|$ (corresponding to $k$ in Lemma~\ref{lm:iteration cost}) is small.
The cardinality $|Y\setminus L'_{0}|$ in the $j$-th search in collecting $1$-claws
is equal to $j-1$ for $j=1,\dots, \lceil 4c_NN_1\rceil$.
Hence, the number of qubits required for collecting $1$-claws
is $\tilde{O}(c_N N_1)$.
Therefore, the total number of required qubits is 
$
\tilde{O}\big(\max\big\{c_N \cdot N_1,\max_{i\ge 2}c_N\cdot N_{i-1}\big\}\big)=\tilde{O}\left(c_N \cdot N_1\right)=\tilde{O}\big(c_N\cdot N^{({2^{l-1}-1})/({2^{l}-1})}\big).$
\end{proofof}

\vspace{12pt}
Finally, we provide the proofs of the two claims.

\begin{proofof}{\autoref{cl:PrGood(i)}}
Suppose that $\Mclaw_k$ has finished making $L_{i-1}$ but has not started to make $i$-claws yet.
Hence, it holds that $\card{L_{i-1}}=\card{L'_{i-1}}=\left\lceil4c_N N_{i-1}\right\rceil$.
Let $\pregood^{(i)}$ be the event that $\card{{\rm Im}(f_i)} \geq \left\lceil N/3c_N \right\rceil$ holds.
It follows from $c_N\ge 1$
that ${\card{X_i}}/{2} - \sqrt{\card{X_i}\ln\card{Y}/2} \geq \lceil {N}/({3c_N}) \rceil$ holds for sufficiently large $N$.
This together with \autoref{lem:BallBin} implies that
\begin{equation*}
\Pr\left[ \pregood^{(i)} \right] \geq 1 - \frac{2}{\card{Y}}.
\end{equation*}
Let us identify $X_i$ and $Y$ with the sets $\{1,\dots,\card{X_i}\}$ and $\{1,\dots,\card{Y}\}$, respectively.
Let $z_j$ be the $j$-th element in ${\rm Im}(f_i)$.
Let $\chi_j$ be the indicator variable that is defined as $\chi_j = 1$ if and only if $z_j \in L'_{i-1}$, and define a random variable $\chi$ as $\chi:= \sum_{j=1}^{\card{{\rm Im}(f_i)}} \chi_j$.
Then, $\chi$ follows the hypergeometric distribution.
We thus apply \autoref{th:BoundOnHyperGeometricDistribution} with $n_1 = \left\lceil N / 3c_N\right\rceil$, $n=N$, and $m = |L'_{i-1}| = \left\lceil 4c_N N_{i-1} \right\rceil$ for the random variable $\chi$ under the condition that $\card{{\rm Im}(f_i)} = \left\lceil N/3c_N\right\rceil$.
Let $\equal$ denote the event that $\card{{\rm Im}(f_i)} = \left\lceil N/3c_N \right\rceil$ holds.
Then, we have 
\begin{align*}
&\Pr\left[ \chi - \E\left[\chi \middle| \equal \right] < -\frac{1}{4}  \E\left[\chi \middle| \equal\right] \middle| \equal \right] \nonumber \\
&\quad \leq
\exp\left( -2\left( \frac{1}{m+1} + \frac{1}{n-m+1}\right) \left( \right( \E\left[\chi\middle|\equal\right] / 4 \left)^2 - 1 \right)\right) \nonumber \\
&\quad \leq \exp\left( -\frac{1}{10}\frac{1}{m} \left( \E\left[\chi\middle|\equal\right]\right)^2 \right) \leq \exp\left(- \frac{1}{25} \cdot \frac{N_{i-1}}{c_N}\right).
\end{align*}
This implies $\Pr\left[ \chi \geq N_{i-1} \middle| \equal \right] \geq 1 - \exp\big(- {N_{i-1}}/(25{c_N})\big)$
by using $\E\left[\chi\middle|\equal\right]= \frac{n_1m}{n} \geq \frac{4}{3}{N_{i-1}}$.
Hence, we have
\begin{align*}
\Pr\left[\left| {\rm Im}(f_i) \cap L'_{i-1}\right| \geq N_{i-1} \middle| \pregood^{(i)} \right]
&=
\Pr\left[\chi \geq N_{i-1} \middle| \pregood^{(i)} \right] \nonumber \\
&\geq
\Pr\left[\chi \geq N_{i-1} \middle| \equal \right] \nonumber \\
&\geq 1 - \exp\left(- \frac{1}{25} \cdot \frac{N_{i-1}}{c_N}\right).
\end{align*}
Therefore, it follows that
\begin{align*}
\Pr\left[ \good^{(i)} \right]
&>
\Pr\left[ \good^{(i)} \middle| \pregood^{(i)} \right] \cdot \Pr\left[ \pregood^{(i)} \right] \nonumber\\
&= \Pr\left[ \left| {\rm Im}(f_i) \cap L'_{i-1}\right| \geq N_{i-1}  \middle| \pregood^{(i)} \right] \cdot \Pr\left[ \pregood^{(i)} \right] \nonumber\\
&\geq \left(1 - \frac{2}{|Y|} \right)\left( 1 - \exp\left(- \frac{1}{25} \cdot \frac{N_{i-1}}{c_N}\right) \right) \nonumber \\
&\geq 1 - \frac{2}{|Y|}  - \exp\left(- \frac{1}{25} \cdot \frac{N_{i-1}}{c_N}\right).
\end{align*}
\end{proofof}

\vspace{12pt}
\begin{proofof}{\autoref{cl:E[Q|good]}}
Let us fix $i$ and $j$.
Let $Q^{(i)}_j$ denote the number of queries made by $\Mclaw_k$ in performing the $j$-th search 
to construct
the list of $i$-claws, and let $Q^{(i)}:=\sum_{j=1}^{\ceil{4c_N\cdot N_i}} Q^{(i)}_j$.
In the $j$-th search to construct the list of $i$-claws, we search $X_i$ for $x$ with $f_i(x)\in L'_{i-1}$, where there exist at least $|L'_{i-1} \cap {\rm Im}(f_i)| \geq N_{i-1} -j+1$ answers in $X_i$ under the condition that $\good^{(i)}$ occurs.
From~\autoref{cor:MTPS}, the expected number of queries made by $\MTPS$ in the $j$-th search
is upper-bounded by
\begin{align*}
9 \sqrt{5\card{X_i}/\card{f^{-1}_i(L'_{i-1})}} &\leq 9 \sqrt{5\card{X_i}/\card{L'_{i-1} \cap {\rm Im}(f_i)}} 
\leq 21 \sqrt{N/(c_N\cdot N_{i-1})}
\end{align*}
for each $j$ under the condition that $\good^{(i)}$ occurs, 
where we used the condition that $N_{i-1} = \omega(c_N N_i)$ holds 
and the assumption that $\card{X_i}=N/c_N$
for the last inequality.
Hence, it follows that
\begin{align*}
 \E\left[Q^{(i)} \mymiddle \good^{(i)}\right]
 &= \E\left[\sum_{j=1}^{ \lceil 4c_N N_i \rceil} Q_j^{(i)} \mymiddle \good^{(i)}\right] = \sum_{j=1}^{ \lceil 4c_N N_i \rceil} \E\left[Q_j^{(i)} \mymiddle \good^{(i)}\right] \\
 &\leq \sum_{j=1}^{ \lceil 4c_N N_i \rceil} 21 \sqrt{{N}/(c_N\cdot N_{i-1})} 
 \leq
 \begin{cases}
 169\cdot c_N\cdot N^{\frac{2^{\ell-1}-1}{2^\ell-1}} & (i=1)\\
 85 \cdot c_N^{1/2}\cdot N^{\frac{2^{\ell-1}-1}{2^\ell-1}} & (i\geq 2).
 \end{cases}
\end{align*}
This implies that $\E[Q \mid  \good]
 = \sum_{i} \E\left[Q^{(i)} \mymiddle \good^{(i)}\right]$ is upper-bounded by 
\begin{align*}
169\cdot c_N\cdot N^{\frac{2^{\ell-1}-1}{2^{\ell}-1}} 
     + \sum_{i=2}^{\ell} 85\cdot c_N^{1/2}\cdot N^{\frac{2^{\ell-1}-1}{2^{\ell}-1}}
 &\leq 169 \cdot \ell \cdot c_N \cdot  N^{\frac{2^{\ell-1}-1}{2^{\ell}-1}}  = \frac{1}{k} \mathsf{Qlimit}_k,
\end{align*}
which completes the proof.
\end{proofof}


\section{Discussions}\label{sec:discussions}

\subsection{Intermediate Measurements}\label{sec:IntMeas}
In describing the algorithm,
we assumed, for ease of analysis and understanding, that intermediate measurements were allowed.
For some implementations, it might be better to move all measurements to the end of the algorithm.
This is possible by the standard techniques as sketched in the following: All classical deterministic operations can be performed by quantum gates: Toffoli gates and $X$ gates, which act as AND gates and NOT gates, respectively (here we assume that available ancillary qubits are only the ones initialized to $\ket{0}$).
The only thing we need to concern is how to randomly determine the number of Grover iterations to be performed in each multi-target quantum search (\autoref{cor:MTPS}). Note that, in general, we can think of any randomized algorithm as deterministic one once the random bit string used in it is fixed.  For a random bit string $r$ of polynomial length, let $Q(r)$ be the $\ell$-claw (or $\ell$-collision) finding algorithm in which $r$ is used as a random bit string in determining the number of Grover iterations. For any fixed $r$,  $Q(r)$ consists of deterministic parts and quantum parts. Since the deterministic part can be quantized as stated above, $Q(r)$ can be represented as a single quantum circuit. Thus, we can modify the algorithm as follows: first prepare a uniform superposition over all possible $r$ in a quantum registers $\mathsf{R}$ (by applying Hadamard gates on the qubits in $\mathsf{R}$, which are initialized to $\ket{0}$),
and the all-zero state in a quantum register $\mathsf{W}$,
then run $Q(r)$ on $\mathsf{W}$ 
if the content of $\mathsf{R}$ is $r$ (more precisely,
run $Q(r)$ controlled by the content $r$ of $\mathsf{R}$),
and finally measure $\mathsf{W}$. From the above discussion, the success probability is exactly the same as that of the original algorithm with intermediate measurements. 
With this modification, the required space increases 
by the size of register $\mathsf{R}$ plus the number of ancillary qubits
required to quantize the deterministic part
and to make each gate in $Q(r)$ controlled by the content $r$ of $\mathsf{R}$.

\subsection{A Recursive Multicollision-Finding Algorithm}
\label{subsec:A_Recursive_Multicollision-Finding_Algorithm}
This section provides a recursive algorithm that finds an $\ell$-collision of a random function $f:X\rightarrow Y$ such that $|X| \geq \ell \cdot |Y|=\ell\cdot N$ by making $O\big( N^{(3^{\ell-1}-1)/(2\cdot 3^{\ell-1})} \big)$ queries and using $\tilde{O}(N^{1/3})$ qubits in time $\tilde{O}\big( N^{(3^{\ell-1}-1)/(2\cdot 3^{\ell-1})} \big)$ for any positive integer constant $\ell$.%
\footnote{The recursive algorithm presented here is from 
this paper's first conference version~\cite{DBLP:conf/asiacrypt/HosoyamadaSX17}.}
Let us denote this algorithm by 
$\rcoll$. 
The quantum query complexity and time complexity of $\rcoll$ are worse than those of $\Mclaw$ 
for $\ell \geq 3$.
However, $\rcoll$ uses only $\tilde{O}(N^{1/3})$ qubits for any $\ell$, whereas $\Mclaw$
 requires $\tilde{O}(N^{(2^{\ell-1}-1)/(2^l -1)})$ qubits (which is worse than $\tilde{O}(N^{1/3})$ for all $\ell \geq 3$) to find an $\ell$-collision.
Below we give only a rough idea for the algorithm $\rcoll$, and omit the detailed analysis for complexity and success probability. 

\paragraph{Idea of the recursive algorithm $\rcoll$.}
Let $\rcoll(i)$ be the algorithm $\rcoll$ that finds an $i$-collision for a random function $F\sim U(\Func(X,Y))$.
The algorithm $\rcoll(1)$  takes $x \in X$ randomly, make the query $x$ to the oracle $F$, and returns $(x,F(x))$.
For $i \geq 2$, 
$\rcoll(i)$ runs the following two procedures:
\begin{enumerate}
\item recursively call the algorithm $\rcoll(i-1)$ $t_{i-1}$ times to find $t_{i-1}$ many $(i-1)$-collisions, and make a list $L_{i-1}$ of $t_{i-1}$ many $(i-1)$-collisions (parameter $t_{i-1}$ will be fixed later).
\item perform the multi-target quantum search to extend one of $(i-1)$-collisions in $L_{i-1}$ to an $i$-collision.
\end{enumerate}
The first step makes $t_{i-1} \cdot q_{i-1}$ quantum queries, and the second step makes roughly $\sqrt{N/t_{i-1}}$ quantum queries, where $q_{i-1}$  denotes the number of quantum queries made by $\rcoll(i-1)$.
Thus, $\rcoll(i)$ makes roughly $q_i = t_{i-1} \cdot q_{i-1} + \sqrt{N/t_{i-1}}$ quantum queries in total.

To optimize the quantum query complexity, we set $t_{i}$ so that $t_{i}\cdot q_{i} = \sqrt{N/t_{i}}$ holds, which is equivalent to $t_i = (N/q^2_i)^{1/3}$ for all $i$.
Straight forward calculations show that the number of queries $q_l$ required to find an $\ell$-collision is optimized to be $N^{(3^{\ell-1}-1)/(2\cdot 3^{\ell-1})}$ by setting $t_i = N^{1/3^i}$ for each $i\in [\ell-1]$ (here we ignore constant multiplicative factors).

In the same way as $\Mclaw_k$, we can show that $\rcoll(\ell)$ can be 
performed in $\tilde{O}(N^{(3^{\ell-1}-1)/(2\cdot 3^{\ell-1})} )$ expected time
on $\tilde{O}(N^{1/3})$ qubits
 (note that $\tilde{O}(N^{1/3})$ suffices since $\max_{1 \leq i \leq \ell}|L_i|$ is in $O(N^{1/3})$).


\section{Conclusion}
\label{sec:conclusion}
Finding multicollisions is one of the most important problems in cryptology, both for attack and provable security. In the post-quantum era, this problem needs to be studied in a quantum setting to realize quantum-secure cryptographic schemes. We systematized knowledge on the multicollision-finding problem in a quantum setting and proposed a new quantum multicollision-finding algorithm for random functions.
For any $1\le c_N \in o(N^{{1}/({2^\ell-1})})$, our algorithm finds an $\ell$-collision of a random function $F \colon [M] \rightarrow [N]$ with $O\big(c_N\cdot N^{(2^{\ell-1}-1)/(2^\ell-1)}\big)$ quantum queries on average and it runs
$\tilde{O}\big(c_N\cdot N^{(2^{\ell-1}-1)/(2^\ell-1)}\big)$ expected time on
$\tilde{O}\big(c_N\cdot N^{(2^{\ell-1}-1)/(2^\ell-1)}\big)$ qubits, 
where $M$ is at least $\ell\cdot N/c_N$ and $\ell$ is a constant.
In particular, the complexities are $O\big(N^{(2^{\ell-1}-1)/(2^\ell-1)}\big)$ and $\tilde{O}\big( N^{(2^{\ell-1}-1)/(2^\ell-1)}\big)$, respectively, if $c_N$ is a constant.
The quantum query complexity matches the known tight bound for $\ell=2$, improves the simple combination of Zhandry and Belovs' results for $\ell=3$, and for the first time improves the trivial bound of $O(N^{1/2})$ for $\ell\ge 4$.

Actually, we provide a quantum algorithm that find an $\ell$-claw of $\ell$ random functions  ${f_i \colon [M_i] \rightarrow [N]}$ for $1 \leq i \leq \ell$ with the same expected quantum query complexity $O\big(c_N\cdot N^{(2^{\ell-1}-1)/(2^\ell-1)}\big)$
and the same expected time/space complexity $\tilde{O}\big(c_N\cdot N^{(2^{\ell-1}-1)/(2^\ell-1)}\big)$, where $M_i$ is at least $N/c_N$.
The multicollision-finding algorithm is obtained 
from this multiclaw-finding algorithm via a simple reduction.

There are still some open problems. 
The parameter $c_N$ controls the ratio of the domain size against the range size,
namely, as $c_N$ gets larger, the domain becomes smaller.
It is known that there is at least one $\ell$-collision with high probability if $c_N\le d_\ell \cdot N^{1/\ell}$ for some constant $d_\ell$. However, our algorithm works only for $c_N\in o(N^{{1}/({2^\ell-1})})$.
Thus, it would be interesting to seek an improved algorithm for all $c_N\le d_\ell \cdot N^{1/\ell}$. 
Second, the lower bound 
$\Omega (N^{(2^{\ell-1}-1)/(2^\ell-1)})$ by Liu and Zhandry~\cite{DBLP:journals/iacr/LiuZ18}
is optimal if $c_N$ is constant. Then, is it possible to improve the lower bound 
in the case of $c_N=\omega(1)$?
Third, we showed that the space complexity of our algorithm is the same as the query complexity up to a logarithmic factor and discussed an idea to reduce the space complexity at the sacrifice of query complexity.
It would be interesting to investigate more throughly the trade-offs between space and query/time complexities.

\bibliographystyle{alpha}
\bibliography{ms}

\end{document}